\newtheorem{remark}{Remark}
\newtheorem{theo}{Theorem}
\newtheorem{assumption}{Assumption}
\newtheorem{corollary}{Corollary}
\begin{document}

\title{Performance Analysis of Multi-Service Oriented Multiple Access Under General Channel Correlation}

\author{Nassar Ksairi and M\'erouane Debbah\\
Mathematical and Algorithmic Sciences Lab,
France Research Center,\\
Huawei Technologies France SASU, 92100 Boulogne-Billancourt, France.\\
Emails:
\{nassar.ksairi, merouane.debbah\}@huawei.com
}

\IEEEoverridecommandlockouts

\maketitle

\begin{abstract}
In this article, we provide both analytical and numerical performance analysis of multi-service oriented multiple access (MOMA), a recently proposed non-orthogonal multiple-access scheme for scenarios with a massive number of concurrent connections originating from separate service classes with diverse quality-of-service (QoS) profiles and running on both handheld terminals and Internet-of-Things (IoT) devices. MOMA is based on both class dependent hierarchical-spreading transmission scheme and per-class reception structure. The performance analysis presented in this article is based on realistic channel and signal models for scenarios where the base station is equipped with a large number of antennas. It provides asymptotically exact approximations of the ergodic rates achievable by MOMA.
\end{abstract}

\IEEEpeerreviewmaketitle


\section{Introduction}

Recent radio access proposals for machine type communications (MTC), whether intended as an {\it in-band} part of future Long-Term Evolution (LTE) and fifth-generation (5G) networks~\cite{tti_bundling} or as an {\it out-of-band} standalone system such as LoRa$^{\textrm{TM}}$ and SIGFOX$^{\textrm{TM}}$~\cite{smart_city}, offer improved IoT deployment capabilities than previous cellular technologies. However, they suffer from limitations in their support for at least one of the following qualities: flexibility in resource assignment, efficiency in resource utilization and differential treatment of diverse MTC QoS and traffic profiles. Indeed, while all of the existing in-band IoT compatible radio access solutions are based on reserving a region in the resource grid for MTC, they have limited flexibility in resource allocation within that region. For instance, resource assignment in eMTC \cite{tti_bundling} is limited to choosing for each MTC transmission one (or several) resource blocks (RBs) out of a total of up to only 6 RBs. The higher resource granularity of narrow-band IoT (NB-IoT)~\cite{tti_bundling} comes at the price of using a different numerology in the MTC region than the one used in the rest of the resource grid  by means of narrow-band frequency division multiple access (FDMA). Achieving this numerology requires in practice the insertion of wasteful guard bands. Moreover, orthogonal multiple-access schemes (such as FDMA in NB-IoT and SIGFOX$^{\textrm{TM}}$ and quasi-orthogonal code division multiple access (CDMA) in LoRa$^{\textrm{TM}}$) are not the best in terms of scalability \cite{cdma_with_pc} with respect to increasing device densities. Furthermore, existing proposals achieve MTC coverage enhancement by means of signal repetition thanks to transmission time interval (TTI) bundling \cite{tti_bundling}. However, this transmission technique is far from optimal in terms of resource utilization efficiency \cite{coverage_enhancement}. Another issue is that MTC transmissions do not all have the same QoS and traffic characteristics. Indeed, while some IoT emissions will be mainly small packets of data from simple sensors, other transmissions will carry video streaming from unmanned aerial vehicles and other IP-enabled cameras \cite{uav_cps}. Moreover, some of the services running on handheld devices, such as messaging, have traffic characteristics and data rate requirements that resemble IoT services. One step towards supporting diverse traffic and QoS profiles across both the IoT and the handheld categories is the {\it prioritized random access} scheme (see \cite{multi_service_ra1} and \cite{multi_service_ra2}) which introduces different treatment in the random-access step for five classes of services/traffic profiles, namely broadband communications, high-priority MTC, low-priority MTC, scheduled MTC and emergency services. However, this support is not yet extended to multiple access.

MOMA is a in-band multiple-access scheme that has been conceived for scenarios with multiple service classes. It is based on a novel code domain multiplexing method that we dubbed {\it service dependent hierarchical spreading}~(\cite{eucnc_2016,commag_2017}) and which guarantees support for both inter-class (quasi) orthogonality and different degrees of resource overloading across different classes. Thanks to its use of the code domain, MOMA offers more flexibility in assigning resources to different service classes and to different connections within each class without being subject to the constraints of narrow-band FDMA and while offering higher efficiency in resource utilization. These advantages were partially analyzed in \cite{eucnc_2016} under an ideal array correlation model and only one linear combining method at the base station. In this article, we present a more thorough performance analysis of MOMA under more general channel correlation model and receiver structure using both theoretical tools from large random matrix theory and simulations.

\paragraph*{Notations}
The $N\times N$ identity matrix is denoted by $\mathbf{I}_N$ while notation
$\mathbf{1}_{N\times M}$ stands for the $N\times M$ matrix with all its
entries set to one. The spectral norm of matrix $\mathbf{M}$ is denoted as
$\|\mathbf{M}\|$ while $\mathrm{tr}(\mathbf{M})$ designates the trace of
$\mathbf{M}$.


\section{System Model}
\label{sec:model}

Consider a wireless system consisting of one base station (BS) equipped with $M$ antennas (indexed using $m\in\{1,\ldots,M\}$) serving $K$ single-antenna user terminals (UTs) that have \emph{uplink} data to transmit and which are grouped into $C\geq2$ service classes. Each service class $c\in\{1,\ldots,C\}$ is characterized with a target uplink data rate $r_c$ that is common to all the class members. The $C$ classes are indexed in a way that their respective target data rates satisfy $r_1>r_2>\ldots>r_C$.
In the sequel, we use $\mathcal{K}_c\subset\{1,2,\ldots,K\}$ to designate the indexes of the users of the $c$-th service class. We also define $K_c\stackrel{\mathrm{def.}}{=}\left|\mathcal{K}_c\right|$. Signal transmission is done using orthogonal frequency division multiplexing (OFDM) with $N_{\mathrm{FFT}}$ subcarriers, $N_{\mathrm{CP}}$-long cyclic prefix and a total bandwidth of $W$ Hz. Assume that the OFDM resource grid is structured into RBs and focus on a transmission that is taking place on $N$ RBs which are adjacent in the frequency, the time or in both the frequency and the time domains as shown in Figure~\ref{fig:CE}. Note that RB adjacency in the time domain could be the result of TTI bundling that will be supported for several transmission scenarios in next-generation cellular systems \cite{tti_bundling}.
The wireless link from UT $k$ to BS antenna $m$ is a multi-path channel with $L\geq1$ dominant paths that remain constant within the duration of an OFDM symbol but which might vary from one symbol to another. Denote by $t\in\{1,\ldots,T\}$ the index of the $T\geq N$ OFDM symbols that fall within the $N$ RBs and let $h_{k,m,t}(.)$ designate the continuous-time impulse response of this channel during the $t$-th OFDM symbol, then
\begin{equation}
 \label{eq:cont_imp_resp}
h_{k,m,t}(\tau)=\sum_{l=0}^{L-1}\alpha_{k,m,l,t}\delta(\tau-\tau_l),\:\forall
\tau\in\mathbb{R},
\end{equation}
where $\tau_l$ is the delay of the $l$-th path and $\alpha_{k,m,l,t}$ is its complex-valued gain. Path gains $\left\{\alpha_{k,m,l,t}\right\}_{l\in\{0\cdots L-1\}}$ are modeled, as in~\cite{mimo_ofdm}, as mutually independent zero-mean random variables with variances $\sigma_l^2\stackrel{\mathrm{def.}}{=} \mathbb{E}\left[\left|\alpha_{k,m,l,t}\right|^2\right]$ that satisfy $\sum_{l=0}^{L-1}\sigma_l^2=1$ and
\begin{align}
 \label{eq:time_corr}
\mathbb{E}\left[\alpha_{k,m,l,t}\alpha_{k,m,l,s}^*\right]&=\sigma_l^2
J_0\left(2\pi f_k^D\left(N_{\mathrm{FFT}}+N_{\mathrm{CP}}\right)T_s(t-s)\right)
\nonumber\\
&\stackrel{\mathrm{def.}}{=}\sigma_l^2r_k^{\alpha}(t-s),
\end{align}
where $f_k^D$ is the maximum Doppler frequency shift and where $T_s\stackrel{\mathrm{def.}}{=}1/W$. Now, define $\mathbf{h}_{k,l,t}\stackrel{\mathrm{def.}}{=} \left[h_{k,1,l,t}\: \cdots\: h_{k,M,l,t}\right]^{\mathrm{T}}$, $\boldsymbol{\alpha}_{k,l,t}\stackrel{\mathrm{def.}}{=}$ $\left[\alpha_{k,1,l,t}\: \cdots\: \alpha_{k,M,l,t}\right]^{\mathrm{T}}$ and assume $\boldsymbol{\alpha}_{k,l,t}\sim \mathcal{CN}\left(\mathbf{0},\mathbf{R}_{k,l}^{\alpha}\right)$, where $\mathbf{R}_{k,l}^{\alpha}\stackrel{\mathrm{def.}}{=}\mathbb{E}\left[\boldsymbol{\alpha}_{k,l,t}\boldsymbol{\alpha}_{k,l,t}^{\mathrm{H}}\right]$. Finally, assume as in \cite{ST_analysis} and \cite{3gpp_kronecker} that the correlation between the two vectors $\boldsymbol{\alpha}_{k,l,t_1}$ and $\boldsymbol{\alpha}_{k,l,t_2}$ where $t_1\neq t_2$ is separable into time domain and space domain multiplicative terms. More precisely, $\mathbb{E}\left[\boldsymbol{\alpha}_{k,l,t_1}\boldsymbol{\alpha}_{k,l,t_2}^{\mathrm{H}}\right]=r_k^{\alpha}(t_1-t_2)\mathbf{R}_{k,l}^{\alpha}$. From \eqref{eq:cont_imp_resp}, the frequency domain channel coefficient between UT $k$ and the $m$-th BS antenna at subcarrier $n$ during OFDM symbol $t$, denoted as $H_{k,m,t}$, is given by
\begin{equation}
 \label{eq:freq_ch}
H_{k,m,t,n}=\sum_{l=0}^{L-1}\alpha_{k,m,l,t}
e^{-2\pi\imath\frac{\tau_l n}{N_{\mathrm{FFT}} T_s}}.
\end{equation}
Now, define $\mathbf{H}_{k,t,n}\stackrel{\mathrm{def.}}{=}\left[H_{k,1,t,n} \cdots H_{k,M,t,n}\right]^{\mathrm{T}}\in\mathbb{C}^{M\times1}$ and denote by $x_{k,t,n}$ the sample transmitted on the $n$-th subcarrier during the $t$-th OFDM symbol. The corresponding $M\times1$ vector of samples received at the BS array then writes as
\begin{equation}
 \label{eq:sig_model}
\mathbf{y}_{t,n}=\sum_{k=1}^{K}\sqrt{g_k}\mathbf{H}_{k,t,n}x_{k,t,n}+
\mathbf{z}_{t,n},
\end{equation}
where $\mathbf{z}_{t,n}$ is a $M\times1$ vector of $\mathcal{CN}(0,\sigma^2)$ noise samples and $g_k$ is the large-scale fading factor.
\begin{figure}
 \centering
 \includegraphics[width=1.00\hsize]{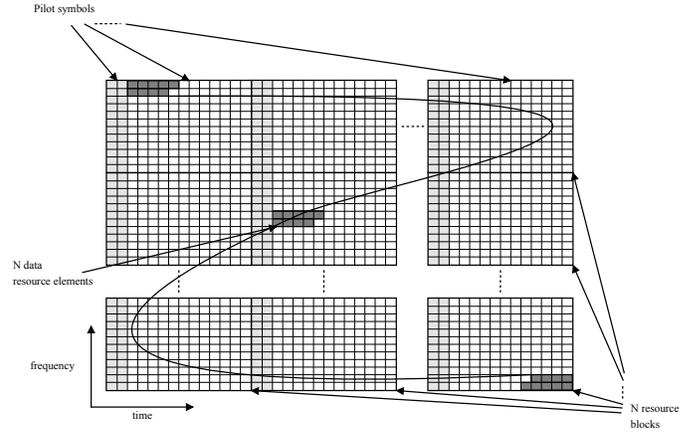}
 \caption{An example of a resource block assignment and of a possible associated
pilot symbol pattern}
 \label{fig:CE}
\end{figure}
Next, we show how samples $x_{k,t,n}$ are constructed from the data symbols of UT $k$ depending on the service class to which UT $k$ belongs.


\section{Multi-Service Oriented Multiple Access}

\subsection{MOMA Transmitter}
\label{sec:moma_mapping}

Let $\mathbf{U}$ be a $N\times N$ orthogonal code matrix, e.g. a Walsh-Hadamard matrix, a discrete Fourier transform (DFT) matrix, etc. The $u$-th column of $\mathbf{U}$ is normalized such that
\begin{equation}
 \label{eq:power_u}
\left\|\left[\mathbf{U}\right]_u\right\|^2=1,\:\forall u\in\{1,\ldots,N\}.
\end{equation}
Next, split $\mathbf{U}$ into $C$ sub-matrices $\{\mathbf{U}_c\}_{c=1\cdots C}$ with the $c$-th sub-matrix having dimensions $N\times N_c$ that satisfy
\begin{equation}
 \label{eq:scenario_of_interest}
\frac{K_C}{N_C}>\cdots
\frac{K_2}{N_2}>\frac{K_1}{N_1}.
\end{equation}
Now, let $\mathbf{W}_c\in\mathbb{C}^{N_c\times K_c}$ be a set of predefined $C$ matrices, each having mutually distinct columns. In MOMA, the $N$-long spreading code $\mathbf{c}_k$ for $k\in\mathcal{K}_c$ is defined as
\begin{equation}
 \label{eq:spreading_code_k_c}
\mathbf{c}_k=\left[\mathbf{U}_c\mathbf{W}_c\right]_{i_k}\nonumber
=\sum_{u=1}^{N_c}\left[\mathbf{w}_{k}\right]_u\left[\mathbf{U}_c\right]_u
\end{equation}
where $\mathbf{w}_{k}\stackrel{\mathrm{def.}}{=}\left[\mathbf{W}_c\right]_{i_k}$ is the $i_k$-th column of $\mathbf{W}_c$ (for some $i_k\in\{1,\ldots,K_c\}$) and where $\left[\mathbf{w}_{k}\right]_u$ is the $u$-th entry of $\mathbf{w}_k$. Let $s_k$ designate one of the unit-variance data symbols from UT $k$. Any such data symbol $s_k$ is then multiplied with vector $\mathbf{c}_k$ resulting in $N$ spread samples $\mathbf{c}_k s_k$. To limit as much as possible the effect of channel gain variations on the desired properties of the code, we map the samples resulting from spreading one data symbol to a group of $N$ neighboring resource elements (REs). Let $\mathcal{N}$ be the group of REs used to transmit the spread samples $\mathbf{c}_k s_k$ and denote by $x_{k,t,n}$ the scaled spread sample transmitted on $(t,n)\in\mathcal{N}$, then
\begin{equation}
\label{eq:tx_symb}
x_{k,t,n}=\sqrt{P_c}\left[\mathbf{c}_k\right]_{(t,n)}s_k,
\end{equation}
where $\left[\mathbf{c}_k\right]_{(t,n)}$ is the entry of $\mathbf{c}_k$ mapped to $(t,n)$.
Vector $\mathbf{w}_{k}$ will help the BS distinguish the signal of UT $k$ from the interfering signals transmitted by the other UTs that belong to the same service class as $k$. In principle, $\mathbf{W}_c$ could be any $N_c\times K_c$ complex-valued matrix with entries chosen that the following average power constraint is satisfied
\begin{equation}
 \label{eq:avg_tx_p_constraint}
\mathbb{E}\left[\left|x_{k,t,n}\right|^2\right]=P_c\quad (k\in\mathcal{K}_c)
\end{equation}
\begin{remark}
The following shorter-term average transmit power constraint might be preferred in order to ease the requirements on the transmit power amplifiers:
\begin{equation}
\label{eq:power_w_c}
\left\|\mathbf{w}_k\right\|^2=1,\:\forall k\in\{1,\ldots,K_c\}.
\end{equation}
Indeed, if \eqref{eq:power_u} and \eqref{eq:power_w_c} are respected, then $\left\|\mathbf{c}_k\right\|^2=1$ and the short-term constraint $\frac{1}{N}\sum_{(t,n)\in\mathcal{N}}\left|x_{k,t,n}\right|^2=P_c$
replaces the long-term constraint in \eqref{eq:avg_tx_p_constraint}. One practical way to obtain matrices $\mathbf{W}_c$ that satisfy \eqref{eq:power_w_c} is to use {\it pseudo-noise} (PN) generators \cite{lte_book} with different initializations to generate the different columns of $\mathbf{W}_c$.
However, since PN sequences are binary, the number of different $N$-long sequences we can get will be upper bounded by $2^N$.
A more systematic method (left for future work) is to select $\left\{\mathbf{w}_k\right\}_{k\in\mathcal{K}_c}$ as a collection of $K_c$ points on the surface of a unit-radius $K_c$-dimensional complex-valued sphere with a minimal pairwise angle that is as large as possible. This problem can be solved in advance for different configurations of $(N_c,K_c)$~\cite{cube_split} and the outcome of this offline optimization can be stored in look-up tables.
\end{remark}


\subsection{MOMA Receiver} 
\label{sec:moma_rec}

\subsubsection{Channel Estimation}
\label{sec:CE}

Assume that $N_p$ REs are reserved for pilot transmission within the $N$ RBs and denote by $\mathcal{P}$ the set of positions of these REs. Let $\mathcal{P}_k\subset\mathcal{P}$ denote the subset of pilot resource elements used in estimating the channel from UT $k\in\mathcal{K}_c$. Let $p_{k,t_i^p,n_i^p}$ designate the pilot symbol transmitted by UT~$k$ at $(t_i^p,n_i^p)\in\mathcal{P}_k$ with power $P_c$. Pilot sequences from different UTs are assumed here to be multiplexed using some combination of frequency division multiplexing (FDM) and time division multiplexing (TDM) resulting in $N_k^p=N_p/K$.
\begin{remark}
To reduce pilot overhead, the same service dependent hierarchical spreading principle that is used for data transmission in MOMA can in principle be applied to pilots. Performance analysis of such a non-orthogonal pilot multiplexing scheme is left for future research works.
\end{remark}
The estimate $\hat{\mathbf{H}}_{k,t_i^p,n_i^p}$ of $\mathbf{H}_{k,t_i^p,n_i^p}$ can be obtained from $\mathbf{p}_{k}\stackrel{\mathrm{def.}}{=}[p_{k,t_1,n_1} \cdots p_{k,t_{N_k^p},n_{N_k^p}}]^{\mathrm{T}}$ using least squares (LS) or linear minimum mean squared error (LMMSE) estimation as
\begin{equation}
\label{eq:p_est}
\hat{\mathbf{H}}_{k,t_i^p,n_i^p}=\mathbf{H}_{k,t_i^p,n_i^p}+
\frac{1}{\sqrt{\gamma_k^{\mathrm{CE}}}}
\mathbf{n}_{k,t_i^p,n_i^p},
\end{equation}
where $\mathbf{n}_{k,t_i^p,n_i^p}\sim\mathcal{CN}\left(\mathbf{0},\mathbf{I}_{M}\right)$ is an estimation error vector that is uncorrelated with $\mathbf{H}_{k,t_i^p,n_i^p}$ and where $\gamma_k^{\mathrm{CE}}$ is the effective channel estimation signal-to-noise ratio (SNR) given by  $\gamma_k^{\mathrm{CE}}=\frac{g_k P_c}{\sigma^2}$. Define $\hat{\mathbf{H}}_k^p$ as the $N_p M\times1$ vector made by concatenating the $N_k^p$ channel estimates $\hat{\mathbf{H}}_{k,t_i^p,n_i^p}$:
\begin{equation}
\label{eq:hat_H_p}
\hat{\mathbf{H}}_k^p\stackrel{\mathrm{def.}}{=}
\left[\hat{\mathbf{H}}_{k,t_1^p,n_1^p}^{\mathrm{T}}\:\cdots\:
\hat{\mathbf{H}}_{k,t_{N_k^p}^p,n_{N_k^p}^p}^{\mathrm{T}}\right]^{\mathrm{T}}
\in\mathbb{C}^{N_p M\times1}.
\end{equation}
Now, define $\mathbf{H}_k$ as the vector made by concatenating the $N$ channel vectors $\mathbf{H}_{k,t_i,n_i}$ at the position $(t_i,n_i)\in\mathcal{N}$ to which the spread samples of data symbol $s_k$ are mapped, i.e.,
\begin{equation}
\label{eq:H_k}
\mathbf{H}_k\stackrel{\mathrm{def.}}{=}
\left[\mathbf{H}_{k,t_1,n_1}^{\mathrm{T}}\:\cdots\:
\mathbf{H}_{k,t_N,n_N}^{\mathrm{T}}\right]^{\mathrm{T}}
\in\mathbb{C}^{N M\times1}.
\end{equation}
We propose to estimate $\mathbf{H}_k$ using LMMSE interpolation based on $\hat{\mathbf{H}}_k^p$. The resulting estimate (denoted $\hat{\mathbf{H}}_k$) is given by
\begin{equation}
\label{eq:hat_H_k}
\hat{\mathbf{H}}_k=\mathbf{R}_k^{\mathcal{NP}}\mathbf{Q}_k^{\mathcal{P}}
\hat{\mathbf{H}}_k^p\in\mathbb{C}^{N M\times1},
\end{equation}
with $\mathbf{R}_k^{\mathcal{NP}}=\mathbb{E}\left[\mathbf{H}_k\left(\hat{\mathbf{H}}_k^p\right)^{\mathrm{H}}\right]$,
 $\mathbf{Q}_k^{\mathcal{P}}=\left(\frac{1}{\gamma_k^{\mathrm{CE}}}\mathbf{I}_{MN_k^p}+
\mathbf{R}_k^{\mathcal{P}}\right)^{-1}$
and $\mathbf{R}_k^{\mathcal{P}}=\mathbb{E}\left[\hat{\mathbf{H}}_k^p
\left(\hat{\mathbf{H}}_k^p\right)^{\mathrm{H}}\right]$. Using the basic properties of LMMSE estimation, it is straightforward to show that $\mathbf{H}_k=\hat{\mathbf{H}}_k+\tilde{\mathbf{H}}_k$, where $\tilde{\mathbf{H}}_k\sim\mathcal{CN}\left(\mathbf{0}, 	\mathbf{R}_k-\boldsymbol{\Phi}_k\right)$ and
\begin{align}
 \label{eq:MatPhi}
&\boldsymbol{\Phi}_k\stackrel{\mathrm{def.}}{=}\mathbf{R}_k^{\mathcal{NP}}\mathbf{Q}_k^{\mathcal{P}}
\left(\mathbf{R}_k^{\mathcal{NP}}\right)^{\mathrm{H}},\\
\label{eq:MatRN}
&\mathbf{R}_k\stackrel{\mathrm{def.}}{=}\left[
\sum_{l=0}^{L-1}r_k^{\alpha}(t_i-t_j)\mathbf{R}_{k,l}^{\alpha}
e^{-2\pi\imath\frac{\tau_l (n_i-n_j)}{N_{\mathrm{FFT}} T_s}}
\right]_{1\leq i,j\leq N}.
\end{align}

\subsubsection{Data Detection}

It is useful to define
\begin{equation}
 \label{eq:Mat_C_k}
\mathbf{C}_k\stackrel{\mathrm{def.}}{=}
\mathrm{diag}\left(\mathbf{c}_k\right)\otimes\mathbf{I}_M\:.
\end{equation}
Data detection is done using $\sqrt{g_k P_c}\mathbf{C}_k\hat{\mathbf{H}}_k$ (the effective $N M$-long channel vector when the spreading code is incorporated) as the signature of UT $k$. Moreover, we assume that the BS employs minimum mean squared error (MMSE) detection for the higher-rate service classes and matched-filtering (MF) for the lower-rate ones using the following vector
\begin{equation}
 \label{eq:detection_r}
\mathbf{r}_k=\left\{
\begin{array}{ll}
\mathbf{r}_k^{\mathrm{MF}},& \textrm{MF detection,}\\
\mathbf{r}_k^{\mathrm{MMSE}},& \textrm{MMSE detection},
\end{array}
\right.
\end{equation}
where $\mathbf{r}_k^{\mathrm{MF}}=\sqrt{g_k P_c}\mathbf{C}_k\hat{\mathbf{H}}_k$ and $\mathbf{r}_k^{\mathrm{MMSE}}=\left(\sum_{j\in\mathcal{K}_c}g_jP_{c}\mathbf{C}_j\hat{\mathbf{H}}_j\hat{\mathbf{H}}_j^{\mathrm{H}}\mathbf{C}_j^{\mathrm{H}}+\sigma^2\mathbf{I}_{NM}\right)^{-1}\sqrt{g_k P_c}\mathbf{C}_k\hat{\mathbf{H}}_k$.

\section{Achievable Rates and Asymptotic Analysis}
\label{sec:perf}

The ergodic rate $R_k$ in bits/s/Hz achieved in the uplink by UT~$k$ from the $c$-th service class is given by $R_k=\mathbb{E}\left[\log_2\left(1+\gamma_k\right)\right]$ where the expectation is taken with respect to the joint distribution of $\{\hat{\mathbf{H}}_k,\mathbf{H}_k\}_{k\in\{1;\ldots,K\}}$ and where $\gamma_k$ is the signal-to-interference-plus-noise ratio given by \eqref{eq:class_c_SINR}.
\begin{figure*}[!t]
\normalsize
\begin{align}
 \label{eq:class_c_SINR}
\gamma_k=&
\frac{P_c g_k\left|\mathbf{r}_k^{\mathrm{H}}
\mathbf{C}_k\hat{\mathbf{H}}_k\right|^2}
{\mathbf{r}_k^{\mathrm{H}}
\left(\sigma^2\mathbf{I}_{MN}+
P_c g_k\mathbf{C}_k\tilde{\mathbf{H}}_k
\tilde{\mathbf{H}}_k^{\mathrm{H}}\mathbf{C}_k^{\mathrm{H}}-P_c g_k
\mathbf{C}_k\mathbf{H}_k\mathbf{H}_k^{\mathrm{H}}\mathbf{C}_k^{\mathrm{H}}+
\sum_{\tilde{c}=1}^{C}\sum_{j\in\mathcal{K}_{\tilde{c}}}P_{\tilde{c}}g_j
\mathbf{C}_j\mathbf{H}_j\mathbf{H}_j^{\mathrm{H}}\mathbf{C}_j^{\mathrm{H}}
\right)\mathbf{r}_k},\quad
\forall k\in\mathcal{K}_c\\
\label{eq:class_c_barSINR_MF}
\bar{\gamma}_k^{\mathrm{MF}}=&
\frac{P_c g_k\left(
	\frac{1}{NM}\mathrm{tr}
	\mathbf{C}_k\boldsymbol{\Phi}_k\mathbf{C}_k^{\mathrm{H}}
	\right)^2}{\frac{\sigma^2}{(NM)^2}\mathrm{tr}
	\mathbf{C}_k\boldsymbol{\Phi}_k\mathbf{C}_k^{\mathrm{H}}+
	\frac{1}{NM}\sum_{\tilde{c}=1}^{C}\sum_{j\in\mathcal{K}_{\tilde{c}}}
	P_{\tilde{c}}g_j\frac{1}{NM}\mathrm{tr}
	\mathbf{C}_j\mathbf{R}_j\mathbf{C}_j^{\mathrm{H}}
	\mathbf{C}_k\boldsymbol{\Phi}_k\mathbf{C}_k^{\mathrm{H}}},\quad
\forall k\in\mathcal{K}_c
\end{align}
\end{figure*}
To derive asymptotic deterministic equivalents of the SINR in \eqref{eq:class_c_SINR}, the following assumption is needed.
\begin{assumption}
 \label{ass:asymptotic}
For all $k\in\{1,\ldots,K\}$,$l\in\{0,\ldots,L-1\}$:
\begin{equation} 
\limsup_M\|\mathbf{R}_{k,l}^{\alpha}\|<\infty,\quad
\liminf_M\frac{1}{M}\mathrm{tr}\mathbf{R}_{k,l}^{\alpha}>0.
\end{equation}
\end{assumption}
\begin{theo}
 \label{theo:asymp_rates_MF}
If $\mathbf{r}_k=\mathbf{r}_k^{\mathrm{MF}}$, then under Assumption~\ref{ass:asymptotic} $\gamma_k-\bar{\gamma}_k^{\mathrm{MF}}
\xrightarrow[M,K_c\to\infty,\lim K_c/M<\infty]{a. s.} 0$ where $\bar{\gamma}_k^{\mathrm{MF}}$ is defined in \eqref{eq:class_c_barSINR_MF}.
\end{theo}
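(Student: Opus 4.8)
\emph{Proof sketch.} The plan is to substitute $\mathbf{r}_k=\mathbf{r}_k^{\mathrm{MF}}=\sqrt{g_kP_c}\,\mathbf{C}_k\hat{\mathbf{H}}_k$ into \eqref{eq:class_c_SINR}, divide numerator and denominator by $(NM)^2 g_kP_c$, and replace every random quadratic or bilinear form by its deterministic counterpart using standard large--random--matrix lemmas. After substitution the numerator equals $g_kP_c\bigl(\tfrac{1}{NM}a_k\bigr)^2$ with $a_k\stackrel{\mathrm{def.}}{=}\hat{\mathbf{H}}_k^{\mathrm{H}}\mathbf{C}_k^{\mathrm{H}}\mathbf{C}_k\hat{\mathbf{H}}_k$. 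For the denominator, write $\mathbf{H}_k=\hat{\mathbf{H}}_k+\tilde{\mathbf{H}}_k$ with $\hat{\mathbf{H}}_k$ independent of $\tilde{\mathbf{H}}_k$ (LMMSE orthogonality), so that $\mathbf{H}_k^{\mathrm{H}}\mathbf{C}_k^{\mathrm{H}}\mathbf{C}_k\hat{\mathbf{H}}_k=a_k+b_k$ with $b_k\stackrel{\mathrm{def.}}{=}\tilde{\mathbf{H}}_k^{\mathrm{H}}\mathbf{C}_k^{\mathrm{H}}\mathbf{C}_k\hat{\mathbf{H}}_k$; the key algebraic point is that the term $-P_cg_k\mathbf{C}_k\mathbf{H}_k\mathbf{H}_k^{\mathrm{H}}\mathbf{C}_k^{\mathrm{H}}$ and the $j=k$ summand of the interference term contribute $-\tfrac{P_cg_k}{(NM)^2}|a_k+b_k|^2$ and $+\tfrac{P_cg_k}{(NM)^2}|a_k+b_k|^2$ and therefore cancel \emph{exactly}, leaving the (rescaled) denominator equal to $\tfrac{\sigma^2}{(NM)^2}a_k+\tfrac{P_cg_k}{(NM)^2}|b_k|^2+\sum_{\tilde c=1}^{C}\sum_{j\in\mathcal{K}_{\tilde c},\,j\neq k}\tfrac{P_{\tilde c}g_j}{(NM)^2}\bigl|\mathbf{H}_j^{\mathrm{H}}\mathbf{C}_j^{\mathrm{H}}\mathbf{C}_k\hat{\mathbf{H}}_k\bigr|^2$. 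This is precisely why $\bar\gamma_k^{\mathrm{MF}}$ carries no term proportional to $\bigl(\tfrac{1}{NM}\mathrm{tr}\,\mathbf{C}_k\boldsymbol{\Phi}_k\mathbf{C}_k^{\mathrm{H}}\bigr)^2$ in its denominator.

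Next I would record the uniform-in-$M$ boundedness of $\|\boldsymbol{\Phi}_k\|\le\|\mathbf{R}_k\|$, $\|\mathbf{R}_j\|$, $\|\mathbf{C}_j\|$ and of $\bigl\|\tfrac{1}{NM}\sum_{j}g_j\mathbf{C}_j\mathbf{R}_j\mathbf{C}_j^{\mathrm{H}}\bigr\|$: the first three are immediate from Assumption~\ref{ass:asymptotic} with $N,L$ fixed (each block of $\mathbf{R}_k$ is a finite sum of the $\mathbf{R}_{k,l}^{\alpha}$ weighted by coefficients of modulus $\le1$, $\boldsymbol{\Phi}_k\preceq\mathbf{R}_k$, and $\mathbf{C}_j$ is diagonal with bounded entries), while the last one uses in addition $\limsup_M K/M<\infty$ and boundedness of the $g_j$. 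Writing $\hat{\mathbf{H}}_k=\boldsymbol{\Phi}_k^{1/2}\mathbf{x}_k$, $\tilde{\mathbf{H}}_k=(\mathbf{R}_k-\boldsymbol{\Phi}_k)^{1/2}\mathbf{y}_k$, $\mathbf{H}_j=\mathbf{R}_j^{1/2}\mathbf{u}_j$ with mutually independent standard complex Gaussian vectors, the trace lemma for Gaussian quadratic forms gives $\tfrac{1}{NM}a_k-\tfrac{1}{NM}\mathrm{tr}\,\mathbf{C}_k\boldsymbol{\Phi}_k\mathbf{C}_k^{\mathrm{H}}\xrightarrow{\mathrm{a.s.}}0$; squaring on the (a.s.\ bounded) range of these quantities yields the numerator of $\bar\gamma_k^{\mathrm{MF}}$. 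The same lemma, together with the analogous bilinear--form lemma for independent Gaussian vectors, shows that $\tfrac{\sigma^2}{(NM)^2}a_k$ and $\tfrac{P_cg_k}{(NM)^2}|b_k|^2$ tend to $0$ almost surely, which matches the noise term and the (vanishing, $O(1/(NM))$) $j=k$ contribution in the denominator of \eqref{eq:class_c_barSINR_MF}.

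The main obstacle is the cross--user interference term: it is a sum of $\Theta(M)$ contributions, so a fixed--rank perturbation argument does not apply, and the signature $\mathbf{C}_k\hat{\mathbf{H}}_k$ against which the interferers are projected is itself random. I would handle it by conditioning on $\hat{\mathbf{H}}_k$. Conditionally, $\bigl|\mathbf{H}_j^{\mathrm{H}}\mathbf{C}_j^{\mathrm{H}}\mathbf{C}_k\hat{\mathbf{H}}_k\bigr|^2=\beta_{jk}\chi_j$, where $(\chi_j)_{j\neq k}$ are i.i.d.\ unit--mean exponential and $\beta_{jk}=\tfrac{P_{\tilde c(j)}g_j}{(NM)^2}\hat{\mathbf{H}}_k^{\mathrm{H}}\mathbf{C}_k^{\mathrm{H}}\mathbf{C}_j\mathbf{R}_j\mathbf{C}_j^{\mathrm{H}}\mathbf{C}_k\hat{\mathbf{H}}_k\ge0$ ($\tilde c(j)$ denoting the class of user $j$). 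Decompose the interference term as $\sum_{j\neq k}\beta_{jk}(\chi_j-1)+\sum_{j\neq k}\beta_{jk}$. Since $a_k=O(NM)$ a.s., one has $\max_{j}\beta_{jk}=O(1/(NM))$ and $\sum_{j\neq k}\beta_{jk}=O(K/M)=O(1)$ a.s., so the conditional variance of the first sum, $\sum_{j\neq k}\beta_{jk}^{2}\le(\max_{j}\beta_{jk})\sum_{j\neq k}\beta_{jk}$, tends to $0$; a fourth--moment inequality for sums of independent variables, combined with the crude bound $\mathbb{E}\|\mathbf{C}_k\hat{\mathbf{H}}_k\|^{8}=O((NM)^4)$ to take the outer expectation, gives $\mathbb{E}\bigl(\sum_{j\neq k}\beta_{jk}(\chi_j-1)\bigr)^{4}=O(M^{-2})$, so Markov's inequality and the Borel--Cantelli lemma yield almost--sure convergence to $0$. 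The second sum equals $\hat{\mathbf{H}}_k^{\mathrm{H}}\mathbf{A}_M\hat{\mathbf{H}}_k$ with the \emph{deterministic} matrix $\mathbf{A}_M=\sum_{j\neq k}\tfrac{P_{\tilde c(j)}g_j}{(NM)^2}\mathbf{C}_k^{\mathrm{H}}\mathbf{C}_j\mathbf{R}_j\mathbf{C}_j^{\mathrm{H}}\mathbf{C}_k$, for which $\|NM\,\boldsymbol{\Phi}_k^{1/2}\mathbf{A}_M\boldsymbol{\Phi}_k^{1/2}\|=O(1)$, so the trace lemma gives $\hat{\mathbf{H}}_k^{\mathrm{H}}\mathbf{A}_M\hat{\mathbf{H}}_k-\mathrm{tr}(\mathbf{A}_M\boldsymbol{\Phi}_k)\xrightarrow{\mathrm{a.s.}}0$, and $\mathrm{tr}(\mathbf{A}_M\boldsymbol{\Phi}_k)=\sum_{j\neq k}\tfrac{P_{\tilde c(j)}g_j}{(NM)^2}\mathrm{tr}\bigl(\mathbf{C}_j\mathbf{R}_j\mathbf{C}_j^{\mathrm{H}}\mathbf{C}_k\boldsymbol{\Phi}_k\mathbf{C}_k^{\mathrm{H}}\bigr)$ is exactly the interference term of \eqref{eq:class_c_barSINR_MF} up to the vanishing $j=k$ summand.

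Finally, numerator and denominator of $\gamma_k$ both converge almost surely, and the limiting denominator is bounded away from zero in the loaded regime $\liminf_M K_c/M>0$; since $\bar\gamma_k^{\mathrm{MF}}$ differs from this common limit only through terms that vanish, the continuous--mapping theorem yields $\gamma_k-\bar\gamma_k^{\mathrm{MF}}\xrightarrow{\mathrm{a.s.}}0$. The delicate step is the third paragraph: the linear--in--$M$ number of interferers together with the randomness of the projecting signature are what force the conditioning argument and the explicit moment control of $\|\mathbf{C}_k\hat{\mathbf{H}}_k\|$.
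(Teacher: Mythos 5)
Your argument is correct in its essentials, but it takes a genuinely different route from the paper. The paper's proof is a two-line reduction: it identifies $\mathbf{S}_k=g_kP_c\mathbf{C}_k\mathbf{R}_k\mathbf{C}_k^{\mathrm{H}}$ and $\mathbf{F}_k=g_kP_c\mathbf{C}_k\boldsymbol{\Phi}_k\mathbf{C}_k^{\mathrm{H}}$ as the true and estimated effective covariance matrices, invokes the known deterministic-equivalent result for matched-filter/MRC receivers with imperfect CSI (Theorem~3 of the Hoydis--ten Brink--Debbah reference \cite{ul_dl}), and reduces the whole proof to verifying that theorem's regularity conditions ($\limsup_M\|\mathbf{S}_k\|<\infty$, $\liminf_M\frac{1}{NM}\mathrm{tr}\,\mathbf{S}_k>0$, and the lower bound on the average normalized trace of the $\mathbf{F}_j$), which follow from Assumption~\ref{ass:asymptotic} because $N$ and $L$ are fixed and $\mathbf{C}_k$ is a bounded diagonal scaling. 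You instead re-derive that result from scratch: the exact cancellation of the $j=k$ interference summand against $-P_cg_k\mathbf{C}_k\mathbf{H}_k\mathbf{H}_k^{\mathrm{H}}\mathbf{C}_k^{\mathrm{H}}$, the trace lemma for the signal and noise quadratic forms, and the conditioning-on-$\hat{\mathbf{H}}_k$ argument with the exponential representation, fourth-moment bound and Borel--Cantelli for the $\Theta(M)$-term interference sum. All of these steps are sound (the decomposition $\sum_j\beta_{jk}\chi_j=\sum_j\beta_{jk}(\chi_j-1)+\hat{\mathbf{H}}_k^{\mathrm{H}}\mathbf{A}_M\hat{\mathbf{H}}_k$ with $\mathbf{A}_M$ deterministic is exactly the right way to handle the random projecting signature, and it relies correctly on the joint Gaussianity and the independence of $\hat{\mathbf{H}}_k$ and $\tilde{\mathbf{H}}_k$ from LMMSE orthogonality). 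What the paper's route buys is brevity and reuse of a verified result; what yours buys is a self-contained proof that makes explicit where Gaussianity, the independence structure across users, and the finiteness of $N$ and $L$ enter, and that would extend to non-Gaussian channel vectors with enough finite moments. Two small points of care: your normalization by $(NM)^2g_kP_c$ should be $(NM)^2(g_kP_c)^2$ for the displayed expressions to match (harmless, since the ratio is unaffected), and your final Slutsky-type step needs the deterministic denominator bounded away from zero, which requires the loaded regime $\liminf_M K_c/M>0$ (equivalently the paper's condition on $\frac{1}{K}\sum_j\frac{1}{NM}\mathrm{tr}\,\mathbf{F}_j$) -- you flag this, and it is in fact an implicit hypothesis of the theorem as stated in the paper as well.
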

\begin{proof}[Sketch of the proof]
The proof follows from \cite[Theorem 3]{ul_dl} if matrices $\mathbf{S}_k=g_k P_c\mathbf{C}_k\mathbf{R}_k\mathbf{C}_k^{\mathrm{H}}$ and $\mathbf{F}_k=g_k P_c\mathbf{C}_k\boldsymbol{\Phi}_k\mathbf{C}_k^{\mathrm{H}}$ satisfy $\limsup_M\|\mathbf{S}_k\|<\infty$, $\liminf_M\frac{1}{NM}\mathrm{tr}\mathbf{S}_k>0$ and $\limsup_M\left(\frac{1}{K}\sum_{j\in\{1,\ldots,K\}} \frac{1}{NM}\mathrm{tr}\mathbf{F}_j\right)^{-1}<\infty$. These conditions can be shown to hold under Assumptions~\ref{ass:asymptotic} by using basic properties of the trace function and of the spectral norm.
\end{proof}
Now, let $\left\{\{\mathbf{S}_j\}_{j\in\{1\cdots J\}},\mathbf{F}\right\}$ be a set of $J+1$ Hermitian non-negative definite $I\times I$ matrices, each with a uniformly bounded spectral norm with respect to $I$. We define for any $\rho>0$ the matrix $\mathbf{T}\left(\rho,I,\{\mathbf{S}_j\}_{j\in\{1\cdots J\}}\right)\in\mathbb{C}^{I\times I}$ as 
\begin{equation}
 \label{eq:T_def}
\mathbf{T}\stackrel{\mathrm{def.}}{=}\left(\frac{1}{I}\sum_{j\in\{1\cdots J\}}\frac{\mathbf{S}_j}{1+\delta_j}+\rho\mathbf{I}_I\right)^{-1}
\end{equation}
where for all $k\in\{1,\ldots,J\}$ and $t\geq1$
\begin{align}
 \label{eq:delta_lim}
\delta_k\left(\rho,I,\{\mathbf{S}_j\}_{j\in\{1\cdots J\}}\right)
\stackrel{\mathrm{def.}}{=}\lim_{t\to\infty}\delta_k^{(t)},\\
 \label{eq:delta_t}
\delta_k^{(t)}\stackrel{\mathrm{def.}}{=}\frac{1}{I}\mathrm{tr}\mathbf{S}_k
\left(\frac{1}{I}\sum_{j\in\{1\cdots J\}}\frac{\mathbf{S}_j}{1+\delta_j^{(t-1)}}+
\rho\mathbf{I}_I\right)^{-1}.
\end{align}
If we set $\delta_k^{(0)}=\frac{1}{\rho}$ for all $k\in\{1,\ldots,J\}$, then the limit in~\eqref{eq:delta_lim} exists due to~\cite[Theorem~1]{couillet}. We also define
\begin{equation}
 \label{eq:T_prime_def}
\mathbf{T}'\stackrel{\mathrm{def.}}{=}
\mathbf{T}\mathbf{F}\mathbf{T}+\frac{1}{I}\mathbf{T}\sum_{j\in\{1\cdots J\}}\frac{\mathbf{S}_j\delta_j^{'}}{\left(1+\delta_j\right)^2}\mathbf{T},
\end{equation}
where $\{\delta_j^{'}\left(\rho,I,\{\mathbf{S}_j\}_{j\in\{1\cdots J\}},
\mathbf{F}\right)\}_{j\in\{1\cdots J\}}$ are the entries of the
$J$-long vector $\boldsymbol{\delta}'$ defined as $\boldsymbol{\delta}'\stackrel{\mathrm{def.}}{=}\left(\mathbf{I}_J-\mathbf{J}\right)^{-1}\mathbf{v}$
with
\begin{equation}
\left[\mathbf{J}\right]_{kj}\stackrel{\mathrm{def.}}{=}\frac{\frac{1}{I}\mathrm{tr}\mathbf{S}_k\mathbf{T}\mathbf{S}_j\mathbf{T}}{I\left(1+\delta_j\right)^2},\qquad
\left[\mathbf{v}\right]_k\stackrel{\mathrm{def.}}{=}
\frac{1}{I}\mathrm{tr}\mathbf{S}_k\mathbf{T}\mathbf{F}\mathbf{T}
\end{equation}
for any $(k,j)\in\{1,\ldots,J\}^2$.
MOMA's asymptotic ergodic rates for classes with MMSE detection can be characterized with the help of the above definitions by the following theorem. Its proof is omitted here due to page count limitations.
\begin{theo}
 \label{theo:asymp_rates_MMSE}
If $\mathbf{r}_k=\mathbf{r}_k^{\mathrm{MMSE}}$, then under Assumption~\ref{ass:asymptotic} $\gamma_k-\bar{\gamma}_k^{\mathrm{MMSE}}\xrightarrow[M,K_c\to\infty,\lim K_c/M<\infty]{a. s.} 0$ where $\forall k\in\mathcal{K}_c$
\begin{equation}
 \label{eq:class_c_barSINR_MMSE}
 \begin{multlined}
\bar{\gamma}_k^{\mathrm{MMSE}}=\\
\frac{\delta_{ck}^2}{\frac{\sigma^2}{(NM)^2}\mathrm{tr}P_c g_k
\mathbf{C}_k\boldsymbol{\Phi}_k\mathbf{C}_k^{\mathrm{H}}
\bar{\mathbf{T}}_{c}^{'}+
\frac{1}{NM}\sum_{\tilde{c}=1}^{C}
\sum_{j\in\mathcal{K}_{\tilde{c}}}\mu_{c\tilde{c}kj}},
 \end{multlined}
\end{equation}
and
\begin{equation}
\label{eq:mu_mmse_def}
 \begin{multlined}
  \mu_{c\tilde{c}kj}\stackrel{\mathrm{def.}}{=}
  \frac{P_{\tilde{c}} g_j}{NM}\mathrm{tr}
  \mathbf{C}_j\mathbf{R}_j\mathbf{C}_j^{\mathrm{H}}\mathbf{T}_{ck}^{'}-\\
  \mathbbm{1}_{\mathcal{K}_c}(j)\frac{2\mathrm{Re}\left(
  	\vartheta_{c\tilde{c}j}^*\vartheta_{c\tilde{c}kj}^{'}\right)(1+\delta_{cj})-
  	|\vartheta_{c\tilde{c}j}|^2\delta_{ckj}^{'}}{(1+\delta_{cj})^2},
 \end{multlined}
\end{equation}
\begin{align}
\vartheta_{c\tilde{c}j}\stackrel{\mathrm{def.}}{=}&\frac{P_{\tilde{c}} g_j}{NM}\mathrm{tr}
\mathbf{C}_j\boldsymbol{\Phi}_j\mathbf{C}_j^{\mathrm{H}}\mathbf{T}_c,\\
\vartheta_{c\tilde{c}kj}^{'}\stackrel{\mathrm{def.}}{=}&\frac{P_{\tilde{c}} g_j}{NM}\mathrm{tr}
\mathbf{C}_j\boldsymbol{\Phi}_j\mathbf{C}_j^{\mathrm{H}}\mathbf{T}_{ck}^{'}.
\end{align}
Here, we defined $\delta_{ck}$ $=\delta_k\left(\frac{\sigma^2}{M},NM,
\{P_c g_j\mathbf{C}_j\boldsymbol{\Phi}_j\mathbf{C}_j^{\mathrm{H}}\}_{j\in\mathcal{K}_c}\right)$, 
$\delta_{ck}^{'}=\delta_{k}^{'}
\left(\frac{\sigma^2}{M},NM,\{P_c g_j\mathbf{C}_j\boldsymbol{\Phi}_j\mathbf{C}_j^{\mathrm{H}}\}_{j\in\mathcal{K}_c},
\boldsymbol{\Phi}_k\right)$,
$\mathbf{T}_c=\mathbf{T}\left(\frac{\sigma^2}{M},NM,
\{P_c g_j\mathbf{C}_j\boldsymbol{\Phi}_j\mathbf{C}_j^{\mathrm{H}}\}_{j\in\mathcal{K}_c}\right)$,
$\mathbf{T}_{ck}^{'}=
\mathbf{T}'\left(\frac{\sigma^2}{M},NM,\{P_c g_j\mathbf{C}_j\boldsymbol{\Phi}_j\mathbf{C}_j^{\mathrm{H}}\}_{j\in\mathcal{K}_c},
\boldsymbol{\Phi}_k\right)$, and
$\bar{\mathbf{T}}_c^{'}=\mathbf{T}'\left(
\frac{\sigma^2}{M},NM,\{P_c g_j\mathbf{C}_j\boldsymbol{\Phi}_j\mathbf{C}_j^{\mathrm{H}}\}_{j\in\mathcal{K}_c},
\mathbf{I}_{NM}\right)$.
\end{theo}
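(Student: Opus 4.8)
The plan is to mirror the structure of the MF case: isolate the useful signal term, the channel-estimation-error term, and the multi-user interference term in the SINR expression \eqref{eq:class_c_SINR}, and replace each by its deterministic equivalent using the large random matrix machinery of \cite{couillet} and \cite{ul_dl}. The essential observation is that with $\mathbf{r}_k=\mathbf{r}_k^{\mathrm{MMSE}}$, the per-class receiver only inverts the sample covariance built from the members of class $c$, namely $\mathbf{B}_c\stackrel{\mathrm{def.}}{=}\sum_{j\in\mathcal{K}_c}g_jP_c\mathbf{C}_j\hat{\mathbf{H}}_j\hat{\mathbf{H}}_j^{\mathrm{H}}\mathbf{C}_j^{\mathrm{H}}+\sigma^2\mathbf{I}_{NM}$. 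Since $\hat{\mathbf{H}}_j$ has covariance $\boldsymbol{\Phi}_j$, the matrices playing the role of the ``$\mathbf{S}_j$'' in \eqref{eq:T_def}--\eqref{eq:delta_t} are exactly $P_cg_j\mathbf{C}_j\boldsymbol{\Phi}_j\mathbf{C}_j^{\mathrm{H}}$ with $\rho=\sigma^2/M$ and $I=NM$, which is why $\delta_{ck}$, $\mathbf{T}_c$, etc.\ are defined with those arguments. First I would verify that under Assumption~\ref{ass:asymptotic} these matrices have uniformly bounded spectral norm and a trace bounded away from zero: this follows from $\|\mathbf{C}_j\|\le\max_{(t,n)}|[\mathbf{c}_j]_{(t,n)}|$ together with the boundedness of $\|\mathbf{R}_{k,l}^\alpha\|$ through \eqref{eq:MatRN}, \eqref{eq:MatPhi}, and the fact that $\boldsymbol{\Phi}_j\preceq\mathbf{R}_j$.

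Second, I would treat the numerator. Writing $\hat{\mathbf{H}}_k$ as the signature and applying the standard rank-one perturbation / matrix inversion lemma argument, $\sqrt{P_cg_k}\,\mathbf{r}_k^{\mathrm{H}}\mathbf{C}_k\hat{\mathbf{H}}_k = \frac{P_cg_k\,\hat{\mathbf{H}}_k^{\mathrm{H}}\mathbf{C}_k^{\mathrm{H}}\mathbf{B}_{c,k}^{-1}\mathbf{C}_k\hat{\mathbf{H}}_k}{1+P_cg_k\,\hat{\mathbf{H}}_k^{\mathrm{H}}\mathbf{C}_k^{\mathrm{H}}\mathbf{B}_{c,k}^{-1}\mathbf{C}_k\hat{\mathbf{H}}_k}$, where $\mathbf{B}_{c,k}$ excludes the $k$-th term; the quadratic form $\frac{1}{NM}\hat{\mathbf{H}}_k^{\mathrm{H}}\mathbf{C}_k^{\mathrm{H}}\mathbf{B}_{c,k}^{-1}\mathbf{C}_k\hat{\mathbf{H}}_k$ concentrates to $\frac{1}{NM}\mathrm{tr}\,P_cg_k\mathbf{C}_k\boldsymbol{\Phi}_k\mathbf{C}_k^{\mathrm{H}}\mathbf{T}_c=\delta_{ck}$ by the trace lemma and \cite[Theorem~1]{couillet}. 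After squaring and cancelling, the numerator contributes the $\delta_{ck}^2$ appearing in \eqref{eq:class_c_barSINR_MMSE}. Third, for the denominator I would expand the three quadratic-form families in \eqref{eq:class_c_SINR} evaluated at $\mathbf{r}_k$: (i) the noise term $\sigma^2\mathbf{r}_k^{\mathrm{H}}\mathbf{r}_k$ produces, after the same deflation, $\frac{\sigma^2}{(NM)^2}\mathrm{tr}\,P_cg_k\mathbf{C}_k\boldsymbol{\Phi}_k\mathbf{C}_k^{\mathrm{H}}\bar{\mathbf{T}}_c'$ via the ``$\mathbf{T}'$'' deterministic equivalent with $\mathbf{F}=\mathbf{I}_{NM}$; (ii) each interference term $P_{\tilde c}g_j\mathbf{r}_k^{\mathrm{H}}\mathbf{C}_j\mathbf{H}_j\mathbf{H}_j^{\mathrm{H}}\mathbf{C}_j^{\mathrm{H}}\mathbf{r}_k$ with $j\notin\mathcal{K}_c$ is independent of $\mathbf{B}_c$ and concentrates directly to $\frac{P_{\tilde c}g_j}{NM}\mathrm{tr}\,\mathbf{C}_j\mathbf{R}_j\mathbf{C}_j^{\mathrm{H}}\mathbf{T}_{ck}'$, the first term of $\mu_{c\tilde c kj}$; (iii) for $j\in\mathcal{K}_c$ the signature $\hat{\mathbf{H}}_j$ appears both inside $\mathbf{B}_c$ and in the quadratic form, so one must deflate the $j$-th term, expand $(1+P_cg_j\hat{\mathbf{H}}_j^{\mathrm{H}}\mathbf{C}_j^{\mathrm{H}}\mathbf{B}_{c,j}^{-1}\mathbf{C}_j\hat{\mathbf{H}}_j)^{-2}$, and also account for the cross-correlation between $\mathbf{H}_j$ and its estimate $\hat{\mathbf{H}}_j$ (both driven by the same $\hat{\mathbf{H}}_j^p$); this is precisely what generates the subtracted correction involving $\vartheta_{c\tilde c j}$, $\vartheta_{c\tilde c kj}'$ and $\delta_{ckj}'$ in \eqref{eq:mu_mmse_def}, with the indicator $\mathbbm{1}_{\mathcal{K}_c}(j)$ flagging exactly this in-class case. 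The terms $-P_cg_k\mathbf{C}_k\mathbf{H}_k\mathbf{H}_k^{\mathrm{H}}\mathbf{C}_k^{\mathrm{H}}$ and $\tilde{\mathbf{H}}_k$ contributions are handled the same way, using $\mathbf{H}_k=\hat{\mathbf{H}}_k+\tilde{\mathbf{H}}_k$ with the two summands independent.

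The main obstacle, as in any MMSE deterministic-equivalent derivation, is the in-class interference term (iii): the signature of user $j$ being correlated with the inverse $\mathbf{B}_c^{-1}$ forces a second-order expansion, and the extra correlation between true channel $\mathbf{H}_j$ and estimate $\hat{\mathbf{H}}_j$ means one cannot simply reuse an off-the-shelf ``$\mathbf{F}$-type'' equivalent — one needs the refined bilinear form $\frac{1}{NM}\mathrm{tr}\,\mathbf{S}_k\mathbf{T}\mathbf{F}\mathbf{T}$ and the linear system $\boldsymbol{\delta}'=(\mathbf{I}_J-\mathbf{J})^{-1}\mathbf{v}$ to track how a perturbation in one $\mathbf{S}_j$ propagates through all the fixed-point equations. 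Verifying that $(\mathbf{I}_J-\mathbf{J})$ is invertible (spectral radius of $\mathbf{J}$ strictly below one) is the technical lemma that underpins this step and is inherited from \cite{couillet}. Once the in-class bookkeeping is organized around $\vartheta_{c\tilde c j}$ and $\vartheta_{c\tilde c kj}'$, collecting all pieces and invoking a standard continuous-mapping/Slutsky argument on the ratio gives $\gamma_k-\bar{\gamma}_k^{\mathrm{MMSE}}\to0$ almost surely, with the almost-sure convergence uniform over the finitely many classes since $C$ is fixed while only $M$ and the $K_c$ grow.
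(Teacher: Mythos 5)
The paper omits its own proof of Theorem~\ref{theo:asymp_rates_MMSE}, but your proposal follows exactly the route the statement is built around: the rank-one perturbation / trace-lemma machinery of \cite{couillet} and \cite{ul_dl} applied to the per-class matrix $\mathbf{B}_c$ with $\mathbf{S}_j=P_cg_j\mathbf{C}_j\boldsymbol{\Phi}_j\mathbf{C}_j^{\mathrm{H}}$, $\rho=\sigma^2/M$, $I=NM$, and your bookkeeping of the numerator ($\delta_{ck}^2$), the noise term (via $\bar{\mathbf{T}}_c'$), the out-of-class interference (first term of $\mu_{c\tilde{c}kj}$) and the in-class correction (the $\vartheta$, $\vartheta'$, $\delta'$ terms flagged by $\mathbbm{1}_{\mathcal{K}_c}(j)$) matches the structure of \eqref{eq:class_c_barSINR_MMSE}--\eqref{eq:mu_mmse_def}. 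This is the intended, essentially standard, derivation; I see no gap in the sketch.
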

Let us now consider the case where the time and frequency domain correlation functions are flat over the RE set $\mathcal{N}$, i.e., $r_k^{\alpha}(t_i-t_j)=1$ and $e^{-2\pi\imath\frac{\tau_l (n_i-n_j)}{N_{\mathrm{FFT}} T_s}}=1$. In this case, $\mathbf{R}_k=\mathbf{1}_{N\times N}\otimes\check{\mathbf{R}}_k$,  $\mathbf{R}_k^{\mathcal{NP}}=\mathbf{1}_{N\times1}\otimes \left[\check{\mathbf{R}}_{k,1}^{\mathcal{NP}}\:\cdots\: \check{\mathbf{R}}_{k,N_p}^{\mathcal{NP}}\right]$, $\boldsymbol{\Phi}_k=\mathbf{1}_{N\times N}\otimes\check{\boldsymbol{\Phi}}_k$, where $\forall j\in\{1,\ldots,N_p\}$
\begin{align}
 \label{eq:small_N_MatRN_check}
\check{\mathbf{R}}_k&\stackrel{\mathrm{def.}}{=}
\sum_{l=0}^{L-1}\mathbf{R}_{k,l}^{\alpha}\\
\check{\mathbf{R}}_{k,j}^{\mathcal{NP}}&\stackrel{\mathrm{def.}}{=}
\sum_{l=0}^{L-1}r_k^{\alpha}(t_1-t_j^p)\mathbf{R}_{k,l}^{\alpha}
e^{-2\pi\imath\frac{\tau_l (n_1-n_j^p)}{N_{\mathrm{FFT}} T_s}},\\
\check{\boldsymbol{\Phi}}_k&\stackrel{\mathrm{def.}}{=}
\sum_{i=1}^{N_p}\sum_{j=1}^{N_p}\check{\mathbf{R}}_{k,i}^{\mathcal{NP}}
\left[\mathbf{Q}_k^{\mathcal{P}}\right]_{i,j}
\left(\check{\mathbf{R}}_{k,j}^{\mathcal{NP}}\right)^{\mathrm{H}}
\end{align}
The following assumption is a relaxation of strict equality to 1 of the time and frequency domain correlation coefficients.
\begin{assumption}
 \label{ass:small_N}
For all $k\in\{1,\ldots,K\}$:
\begin{align*}
\mathbf{R}_k&=\mathbf{1}_{N\times N}\otimes\check{\mathbf{R}}_k+
\tilde{\mathbf{R}}_k,\\
\mathbf{R}_k^{\mathcal{NP}}&=\mathbf{1}_{N\times1}\otimes
\left[\check{\mathbf{R}}_{k,1}^{\mathcal{NP}}\:\cdots\:
\check{\mathbf{R}}_{k,N_p}^{\mathcal{NP}}\right]+
\tilde{\mathbf{R}}_k^{\mathcal{NP}},
\end{align*}
with $\tilde{\mathbf{R}}_k\in\mathbb{C}^{NM\times NM}$ and
$\tilde{\mathbf{R}}_k^{\mathcal{NP}}\in\mathbb{C}^{NM\times N_pM}$ satisfying
$\lim_{M\to\infty}\left\|\tilde{\mathbf{R}}_k\right\|=
\lim_{M\to\infty}\left\|\tilde{\mathbf{R}}_k^{\mathcal{NP}}\right\|=0$.
\end{assumption}
Note that in practice, if we choose $\mathcal{N}$ to consist of at most few tens of REs, Assumption~\ref{ass:small_N} becomes a mild one.
\begin{corollary}
 \label{cor:small_N_MF}
Under Assumptions~\ref{ass:asymptotic} and \ref{ass:small_N}, we have $\forall k\in\mathcal{K}_c$
\begin{equation}
 \label{eq:small_N_class_c_barSINR_MF}
\bar{\gamma}_k^{\mathrm{MF}}=\frac{P_c g_k\left(
\frac{1}{M}\mathrm{tr}\check{\boldsymbol{\Phi}}_k\right)^2}{
\frac{\sigma^2}{M^2}\mathrm{tr}\check{\boldsymbol{\Phi}}_k+
\frac{1}{M}\sum_{j\in\mathcal{K}_{c}}
P_{c}g_j\frac{1}{M}\left(\mathbf{w}_j^{\mathrm{H}}\mathbf{w}_k\right)^2
\mathrm{tr}\check{\mathbf{R}}_j\check{\boldsymbol{\Phi}}_k}.
\end{equation}
\end{corollary}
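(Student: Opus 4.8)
The plan is to read off \eqref{eq:small_N_class_c_barSINR_MF} by specialising the deterministic equivalent \eqref{eq:class_c_barSINR_MF} of Theorem~\ref{theo:asymp_rates_MF} to the setting of Assumption~\ref{ass:small_N}. Since Theorem~\ref{theo:asymp_rates_MF} already gives $\gamma_k-\bar\gamma_k^{\mathrm{MF}}\to0$ almost surely with $\bar\gamma_k^{\mathrm{MF}}$ as in \eqref{eq:class_c_barSINR_MF}, by the triangle inequality it suffices to prove that the right-hand side of \eqref{eq:small_N_class_c_barSINR_MF} differs from \eqref{eq:class_c_barSINR_MF} by a quantity vanishing as $M\to\infty$. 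Because the denominator of \eqref{eq:class_c_barSINR_MF} stays bounded away from zero under Assumption~\ref{ass:asymptotic} (this positivity is built into the hypotheses that make Theorem~\ref{theo:asymp_rates_MF} meaningful), it is in turn enough to show that the numerator and the denominator of \eqref{eq:class_c_barSINR_MF} coincide, up to an $o(1)$ additive error and a common positive rescaling that cancels in the ratio, with those of \eqref{eq:small_N_class_c_barSINR_MF}.

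The one preliminary fact that needs an argument is that, under Assumption~\ref{ass:small_N}, $\boldsymbol{\Phi}_k=\mathbf{1}_{N\times N}\otimes\check{\boldsymbol{\Phi}}_k+\mathbf{E}_k$ with $\lim_{M\to\infty}\|\mathbf{E}_k\|=0$. Writing $\mathbf{R}_k^{\mathcal{NP}}=\bar{\mathbf{R}}_k^{\mathcal{NP}}+\tilde{\mathbf{R}}_k^{\mathcal{NP}}$ with $\bar{\mathbf{R}}_k^{\mathcal{NP}}=\mathbf{1}_{N\times1}\otimes\big[\check{\mathbf{R}}_{k,1}^{\mathcal{NP}}\,\cdots\,\check{\mathbf{R}}_{k,N_p}^{\mathcal{NP}}\big]$, substituting into $\boldsymbol{\Phi}_k=\mathbf{R}_k^{\mathcal{NP}}\mathbf{Q}_k^{\mathcal{P}}(\mathbf{R}_k^{\mathcal{NP}})^{\mathrm{H}}$ (note that $\mathbf{Q}_k^{\mathcal{P}}$ is left untouched, in accordance with the definition of $\check{\boldsymbol{\Phi}}_k$), and using the Kronecker mixed-product rule, the leading term is $\bar{\mathbf{R}}_k^{\mathcal{NP}}\mathbf{Q}_k^{\mathcal{P}}(\bar{\mathbf{R}}_k^{\mathcal{NP}})^{\mathrm{H}}=\mathbf{1}_{N\times N}\otimes\check{\boldsymbol{\Phi}}_k$. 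Each of the three remaining cross terms carries at least one factor $\tilde{\mathbf{R}}_k^{\mathcal{NP}}$, so $\|\mathbf{E}_k\|\le\big(2\|\bar{\mathbf{R}}_k^{\mathcal{NP}}\|+\|\tilde{\mathbf{R}}_k^{\mathcal{NP}}\|\big)\|\mathbf{Q}_k^{\mathcal{P}}\|\,\|\tilde{\mathbf{R}}_k^{\mathcal{NP}}\|\to0$, because $\|\mathbf{Q}_k^{\mathcal{P}}\|\le\gamma_k^{\mathrm{CE}}$, because $\|\bar{\mathbf{R}}_k^{\mathcal{NP}}\|\le\sqrt{N N_p}\max_j\|\check{\mathbf{R}}_{k,j}^{\mathcal{NP}}\|\le\sqrt{N N_p}\sum_l\|\mathbf{R}_{k,l}^{\alpha}\|$ is bounded under Assumption~\ref{ass:asymptotic} ($N$ and $N_p$ being fixed), and because $\|\tilde{\mathbf{R}}_k^{\mathcal{NP}}\|\to0$. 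The matching decomposition $\mathbf{R}_k=\mathbf{1}_{N\times N}\otimes\check{\mathbf{R}}_k+\tilde{\mathbf{R}}_k$ with $\|\tilde{\mathbf{R}}_k\|\to0$ is Assumption~\ref{ass:small_N} itself.

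I would then insert these decompositions into \eqref{eq:class_c_barSINR_MF}. Using $\mathbf{C}_k=\mathrm{diag}(\mathbf{c}_k)\otimes\mathbf{I}_M$, the mixed-product rule, and the identity $\mathrm{diag}(\mathbf{c})\,\mathbf{1}_{N\times N}\,\mathrm{diag}(\mathbf{c})^{\mathrm{H}}=\mathbf{c}\mathbf{c}^{\mathrm{H}}$ (valid since $\mathbf{1}_{N\times N}=\mathbf{1}_{N\times1}\mathbf{1}_{N\times1}^{\mathrm{T}}$ and $\mathrm{diag}(\mathbf{c})\mathbf{1}_{N\times1}=\mathbf{c}$), one obtains $\mathbf{C}_k\big(\mathbf{1}_{N\times N}\otimes\check{\boldsymbol{\Phi}}_k\big)\mathbf{C}_k^{\mathrm{H}}=(\mathbf{c}_k\mathbf{c}_k^{\mathrm{H}})\otimes\check{\boldsymbol{\Phi}}_k$, whence $\tfrac{1}{NM}\mathrm{tr}\,\mathbf{C}_k\boldsymbol{\Phi}_k\mathbf{C}_k^{\mathrm{H}}=\tfrac{\|\mathbf{c}_k\|^2}{NM}\mathrm{tr}\check{\boldsymbol{\Phi}}_k+O(\|\mathbf{E}_k\|)=\tfrac{1}{NM}\mathrm{tr}\check{\boldsymbol{\Phi}}_k+o(1)$, where $\|\mathbf{c}_k\|^2=\|\mathbf{w}_k\|^2=1$ by \eqref{eq:power_u}, \eqref{eq:power_w_c} and the orthonormality of the columns of $\mathbf{U}_c$; likewise $\tfrac{1}{NM}\mathrm{tr}\big(\mathbf{C}_j\mathbf{R}_j\mathbf{C}_j^{\mathrm{H}}\mathbf{C}_k\boldsymbol{\Phi}_k\mathbf{C}_k^{\mathrm{H}}\big)=\tfrac{|\mathbf{c}_j^{\mathrm{H}}\mathbf{c}_k|^2}{NM}\mathrm{tr}\big(\check{\mathbf{R}}_j\check{\boldsymbol{\Phi}}_k\big)+o(1)$, the error being controlled by $\max(\|\tilde{\mathbf{R}}_j\|,\|\mathbf{E}_k\|)$ times the uniformly bounded quantity $\|\mathbf{C}_j\|^2\|\mathbf{C}_k\|^2\|\mathbf{R}_j\|\,\|\boldsymbol{\Phi}_k\|$ (and the subsequent summation over $j\in\mathcal{K}_c$ keeps the aggregate error $o(1)$ since $K_c/M$ stays bounded). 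The decisive step is the behaviour of $|\mathbf{c}_j^{\mathrm{H}}\mathbf{c}_k|^2$: since $\mathbf{c}_j=\mathbf{U}_{\tilde c}\mathbf{w}_j$ for $j\in\mathcal{K}_{\tilde c}$, we have $\mathbf{c}_j^{\mathrm{H}}\mathbf{c}_k=\mathbf{w}_j^{\mathrm{H}}\mathbf{U}_{\tilde c}^{\mathrm{H}}\mathbf{U}_c\mathbf{w}_k$, which is $0$ when $\tilde c\neq c$ (the sub-matrices $\mathbf{U}_{\tilde c},\mathbf{U}_c$ are built from mutually orthogonal columns of the orthogonal matrix $\mathbf{U}$) and equals $\mathbf{w}_j^{\mathrm{H}}\mathbf{w}_k$ when $\tilde c=c$ (then $\mathbf{U}_c^{\mathrm{H}}\mathbf{U}_c=\mathbf{I}_{N_c}$). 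Hence the double sum in \eqref{eq:class_c_barSINR_MF} collapses to a single sum over $j\in\mathcal{K}_c$ with weights $|\mathbf{w}_j^{\mathrm{H}}\mathbf{w}_k|^2$; collecting everything and cancelling the common factor $(NM)^{-2}$ between numerator and denominator (equivalently, multiplying both by $N^2$) produces exactly \eqref{eq:small_N_class_c_barSINR_MF}.

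The only non-mechanical ingredient is the bound $\|\mathbf{E}_k\|\to0$ together with the verification that every perturbation enters the normalised traces multiplied by a spectrally bounded matrix, so that it still vanishes after the summation over the $\Theta(M)$ users of class~$c$; both rely solely on sub-multiplicativity of the spectral norm, the estimate $|\tfrac{1}{NM}\mathrm{tr}(\mathbf{A}\mathbf{B})|\le\|\mathbf{A}\|\,\|\mathbf{B}\|$ for $NM\times NM$ matrices, and the a priori boundedness of $\|\boldsymbol{\Phi}_k\|$, $\|\mathbf{C}_k\|\le\|\mathbf{c}_k\|=1$ and $\|\mathbf{R}_j\|$ already guaranteed by Assumption~\ref{ass:asymptotic}, exactly as in the proof of Theorem~\ref{theo:asymp_rates_MF}. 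I therefore do not expect any real obstacle beyond keeping the Kronecker-product bookkeeping straight.
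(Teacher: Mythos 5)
Your argument is correct and follows essentially the same route as the paper's own sketch: decompose $\boldsymbol{\Phi}_k=\mathbf{1}_{N\times N}\otimes\check{\boldsymbol{\Phi}}_k+\tilde{\boldsymbol{\Phi}}_k$ with vanishing spectral-norm remainder, exploit $\mathbf{c}_j^{\mathrm{H}}\mathbf{c}_k=\mathbf{w}_j^{\mathrm{H}}\mathbf{w}_k$ within class $c$ (and $=0$ across classes), and reduce the traces via Kronecker-product identities. You have simply made explicit the ``tedious derivations'' and norm bounds that the paper leaves implicit, so no further changes are needed.
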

\begin{proof}[Sketch of the proof] Note that under Assumption~\ref{ass:small_N} and due to \eqref{eq:MatPhi}, $\boldsymbol{\Phi}_k=\mathbf{1}_{N\times N}\otimes\check{\boldsymbol{\Phi}}_k+\tilde{\boldsymbol{\Phi}}_k$ where $\lim_{M\to\infty}\left\|\tilde{\boldsymbol{\Phi}}_k\right\|=0$ for all $k\in\{1,\ldots,K\}$. Also note that for any pair $(k,j)\in\mathcal{K}_c^2$, $\mathbf{c}_j^{\mathrm{H}}\mathbf{c}_k=	\mathbf{w}_j^{\mathrm{H}}\mathbf{w}_k$ due to \eqref{eq:spreading_code_k_c}. Finally, use basic properties of the trace function and of the Kronecker product to conclude, after some tedious derivations, that \eqref{eq:small_N_class_c_barSINR_MF} holds. 
\end{proof}
\begin{corollary}
 \label{cor:small_N_MMSE}
Under Assumptions~\ref{ass:asymptotic} and \ref{ass:small_N}, we have $\forall k\in\mathcal{K}_c$
\begin{equation}
 \label{eq:small_N_class_c_barSINR_MMSE}
\bar{\gamma}_k^{\mathrm{MMSE}}=
\frac{\delta_{ck}^2}{
\frac{\sigma^2}{(NM)^2}\mathrm{tr}P_c g_k
\mathbf{C}_k\boldsymbol{\Phi}_k\mathbf{C}_k^{\mathrm{H}}
\bar{\mathbf{T}}_{c}^{'}+
\frac{1}{NM}\sum_{j\in\mathcal{K}_c}\mu_{cckj}}.
\end{equation}
\end{corollary}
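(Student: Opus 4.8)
The plan is to start from \eqref{eq:class_c_barSINR_MMSE} of Theorem~\ref{theo:asymp_rates_MMSE} and to show that, under Assumption~\ref{ass:small_N}, the cross-class contributions ($\tilde c\neq c$) to the denominator of $\bar\gamma_k^{\mathrm{MMSE}}$ vanish asymptotically. This is enough, since $\delta_{ck}$, $\bar{\mathbf{T}}_c^{'}$, $\mathbf{T}_c$, $\mathbf{T}_{ck}^{'}$ and $\mu_{cckj}$ are built solely from class-$c$ quantities and are therefore the very same symbols in \eqref{eq:class_c_barSINR_MMSE} and \eqref{eq:small_N_class_c_barSINR_MMSE}; hence we only need $\frac{1}{NM}\sum_{\tilde c\neq c}\sum_{j\in\mathcal{K}_{\tilde c}}\mu_{c\tilde c kj}\to0$.

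Consider first the idealized flat case $r_k^{\alpha}(t_i-t_j)=1$ and $e^{-2\pi\imath\frac{\tau_l (n_i-n_j)}{N_{\mathrm{FFT}} T_s}}=1$ examined just before Assumption~\ref{ass:small_N}. There $\mathbf{R}_j=\mathbf{1}_{N\times N}\otimes\check{\mathbf{R}}_j$ and, by \eqref{eq:MatPhi}, $\boldsymbol{\Phi}_j=\mathbf{1}_{N\times N}\otimes\check{\boldsymbol{\Phi}}_j$, so conjugating by $\mathbf{C}_j=\mathrm{diag}(\mathbf{c}_j)\otimes\mathbf{I}_M$ and using $\mathrm{diag}(\mathbf{c}_j)\mathbf{1}_{N\times N}\mathrm{diag}(\mathbf{c}_j)^{\mathrm{H}}=\mathbf{c}_j\mathbf{c}_j^{\mathrm{H}}$ gives $\mathbf{C}_j\mathbf{R}_j\mathbf{C}_j^{\mathrm{H}}=(\mathbf{c}_j\mathbf{c}_j^{\mathrm{H}})\otimes\check{\mathbf{R}}_j$ and $\mathbf{C}_j\boldsymbol{\Phi}_j\mathbf{C}_j^{\mathrm{H}}=(\mathbf{c}_j\mathbf{c}_j^{\mathrm{H}})\otimes\check{\boldsymbol{\Phi}}_j$. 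The decisive ingredient is the inter-class orthogonality designed into MOMA: by \eqref{eq:spreading_code_k_c}, $\mathbf{c}_j\in\mathrm{span}(\mathbf{U}_{\tilde c})$ when $j\in\mathcal{K}_{\tilde c}$, and because $\mathbf{U}$ is orthogonal with the $\mathbf{U}_c$ being disjoint column blocks, $\mathrm{span}(\mathbf{U}_c)\perp\mathrm{span}(\mathbf{U}_{\tilde c})$ for $\tilde c\neq c$. Writing $V_c\stackrel{\mathrm{def.}}{=}(\mathbf{U}_c\mathbf{U}_c^{\mathrm{H}})\otimes\mathbb{C}^M$, the Hermitian matrices above therefore have range inside $V_c$ for a class-$c$ user and inside $V_{\tilde c}$, with $V_{\tilde c}\perp V_c$, for a class-$\tilde c$ user with $\tilde c\neq c$. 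It follows that the matrix inverted in \eqref{eq:T_def} to form $\mathbf{T}_c$ is block-diagonal with respect to $\mathbb{C}^{NM}=V_c\oplus V_c^{\perp}$, hence so is $\mathbf{T}_c$; and since the matrix arguments of \eqref{eq:T_prime_def} are likewise class-$c$ objects of range in $V_c$ (with $\mathbf{F}=\mathbf{C}_k\boldsymbol{\Phi}_k\mathbf{C}_k^{\mathrm{H}}$, of range in $V_c$ because $k\in\mathcal{K}_c$), the matrix $\mathbf{T}_{ck}^{'}$ --- a sum of terms of the form $\mathbf{T}_c(\cdot)\mathbf{T}_c$ --- has range in $V_c$. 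Consequently, for $\tilde c\neq c$ and $j\in\mathcal{K}_{\tilde c}$, the indicator $\mathbbm{1}_{\mathcal{K}_c}(j)$ in \eqref{eq:mu_mmse_def} is zero and $\mu_{c\tilde c kj}=\frac{P_{\tilde c}g_j}{NM}\mathrm{tr}\left[\left((\mathbf{c}_j\mathbf{c}_j^{\mathrm{H}})\otimes\check{\mathbf{R}}_j\right)\mathbf{T}_{ck}^{'}\right]=0$ exactly, the first factor lying in $V_{\tilde c}$ and $\mathbf{T}_{ck}^{'}$ in $V_c\perp V_{\tilde c}$. This proves \eqref{eq:small_N_class_c_barSINR_MMSE} in the flat case.

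Under the weaker Assumption~\ref{ass:small_N} the same identities hold up to additive perturbations of vanishing spectral norm: as in the sketch of Corollary~\ref{cor:small_N_MF}, \eqref{eq:MatPhi} gives $\boldsymbol{\Phi}_j=\mathbf{1}_{N\times N}\otimes\check{\boldsymbol{\Phi}}_j+\tilde{\boldsymbol{\Phi}}_j$ with $\|\tilde{\boldsymbol{\Phi}}_j\|\to0$, whence, $\|\mathbf{C}_j\|$ being bounded in $M$, $\mathbf{C}_j\mathbf{R}_j\mathbf{C}_j^{\mathrm{H}}=(\mathbf{c}_j\mathbf{c}_j^{\mathrm{H}})\otimes\check{\mathbf{R}}_j+\boldsymbol{\Delta}_j^{\mathbf{R}}$ and $\mathbf{C}_j\boldsymbol{\Phi}_j\mathbf{C}_j^{\mathrm{H}}=(\mathbf{c}_j\mathbf{c}_j^{\mathrm{H}})\otimes\check{\boldsymbol{\Phi}}_j+\boldsymbol{\Delta}_j^{\boldsymbol{\Phi}}$ with $\|\boldsymbol{\Delta}_j^{\mathbf{R}}\|,\|\boldsymbol{\Delta}_j^{\boldsymbol{\Phi}}\|\to0$. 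Propagating these perturbations through \eqref{eq:delta_t}, \eqref{eq:T_def}, \eqref{eq:T_prime_def} and \eqref{eq:mu_mmse_def}, the leading-order traces $\mathrm{tr}\,\mathbf{C}_j\mathbf{R}_j\mathbf{C}_j^{\mathrm{H}}\mathbf{T}_{ck}^{'}$ still vanish for $\tilde c\neq c$ by $V_c\perp V_{\tilde c}$, while the residual terms --- $o(1)$-norm matrices of rank $O(M)$ against bounded-norm matrices, divided by $NM$ and summed over the $\sum_{\tilde c\neq c}K_{\tilde c}=O(M)$ cross-class users --- tend to zero; together with the fact that the denominator of \eqref{eq:class_c_barSINR_MMSE} is bounded away from zero (as in the proof of Theorem~\ref{theo:asymp_rates_MMSE}), this yields \eqref{eq:small_N_class_c_barSINR_MMSE}.

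I expect this perturbation step to be the main obstacle. Because $\mathbf{T}_c$ coincides with $\frac{M}{\sigma^2}\mathbf{I}$ on $V_c^{\perp}$, its spectral norm there is of order $M$, so the crude ``small perturbation inside a matrix inverse'' estimate is too lossy; the error analysis should instead be carried out at the level of the scalars $\delta_{cj}$, $\delta_{ckj}^{'}$ and of the individual traces, using that a cross-class signature ever meets $\mathbf{T}_{ck}^{'}$ only through the term $\mathbf{T}_c\mathbf{F}\mathbf{T}_c$ with $\mathbf{F}$ of range in $V_c$, so that the large $V_c^{\perp}$-block of $\mathbf{T}_c$ is annihilated to leading order and only the well-behaved $V_c$-restriction is felt; a mild decay rate on $\tilde{\mathbf{R}}_j$ may be convenient as well. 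The rest is the same Kronecker-product bookkeeping as in the sketch of Corollary~\ref{cor:small_N_MF}.
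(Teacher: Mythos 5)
Your proposal is correct in substance and reaches the same conclusion by a slightly different technical route. Both arguments hinge on the same key fact --- inter-class orthogonality $\mathbf{c}_j^{\mathrm{H}}\mathbf{c}_k=0$ for $(k,j)\in\mathcal{K}_c\times\mathcal{K}_{\tilde{c}}$, $\tilde{c}\neq c$, combined with the Kronecker structure $\boldsymbol{\Phi}_j=\mathbf{1}_{N\times N}\otimes\check{\boldsymbol{\Phi}}_j+o(1)$ under Assumption~\ref{ass:small_N} --- and both begin identically by expanding $\mu_{c\tilde{c}kj}$ via \eqref{eq:mu_mmse_def} with the indicator set to zero (cf.\ \eqref{eq:mu_j_tilde_c}). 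Where you diverge is in how you show that $\mathbf{T}_{ck}^{'}$ is ``blind'' to cross-class signatures: the paper invokes the Cayley--Hamilton theorem to write $\mathbf{T}_c$ as a polynomial in $\frac{1}{NM}\sum_{j\in\mathcal{K}_c}\frac{P_cg_j\mathbf{C}_j\boldsymbol{\Phi}_j\mathbf{C}_j^{\mathrm{H}}}{1+\delta_{cj}}+\frac{\sigma^2}{M}\mathbf{I}_{NM}$ and kills each term of the trace using $\mathbf{c}_j^{\mathrm{H}}\mathbf{c}_k=0$, whereas you observe directly that this matrix, hence $\mathbf{T}_c$, is block-diagonal with respect to $V_c\oplus V_c^{\perp}$ with $V_c=\mathrm{range}(\mathbf{U}_c)\otimes\mathbb{C}^M$, so that $\mathbf{T}_{ck}^{'}$ has range and row space in $V_c$ and the trace against $(\mathbf{c}_j\mathbf{c}_j^{\mathrm{H}})\otimes\check{\mathbf{R}}_j$ vanishes exactly in the flat case. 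The two devices encode the same structural fact; your subspace formulation is arguably more transparent (the Cayley--Hamilton polynomial is just one way of seeing that $\mathbf{T}_c=\frac{M}{\sigma^2}\mathbf{P}_{V_c^{\perp}}+(\text{something supported on }V_c)$). You are also more candid than the paper about the genuinely delicate step, namely propagating the $o(1)$-norm perturbations of Assumption~\ref{ass:small_N} through \eqref{eq:T_def}--\eqref{eq:T_prime_def} when $\|\mathbf{T}_c\|$ grows like $M/\sigma^2$ on $V_c^{\perp}$; the paper compresses this into ``some tedious calculation,'' and your suggestion to carry the error analysis at the level of the scalars $\delta_{cj}$, $\delta_{ckj}^{'}$ and individual traces, exploiting that cross-class terms only meet $\mathbf{T}_c$ through products sandwiching a $V_c$-supported matrix, is the right way to make it rigorous. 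One last remark: your reading of the argument $\mathbf{F}$ of $\mathbf{T}_{ck}^{'}$ as $\mathbf{C}_k\boldsymbol{\Phi}_k\mathbf{C}_k^{\mathrm{H}}$ (rather than the bare $\boldsymbol{\Phi}_k$ written in the statement of Theorem~\ref{theo:asymp_rates_MMSE}) is consistent with the paper's own expansion \eqref{eq:mu_j_tilde_c} and is needed for the range argument to go through.
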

\begin{proof}[Sketch of the proof] First, note that for any $j\in\mathcal{K}_{\tilde{c}}$ with $\tilde{c}\neq c$ the term $\mu_{c\tilde{c}kj}$ in \eqref{eq:class_c_barSINR_MMSE} can be written due to \eqref{eq:mu_mmse_def} as
\begin{equation}
 \begin{multlined}
 \label{eq:mu_j_tilde_c}
\mu_{c\tilde{c}kj}=\frac{P_{\tilde{c}} g_j}{NM}\mathrm{tr}\mathbf{C}_j\mathbf{R}_j\mathbf{C}_j^{\mathrm{H}}\mathbf{T}_c	\mathbf{C}_k\boldsymbol{\Phi}_k\mathbf{C}_k^{\mathrm{H}}\mathbf{T}_c+\\
\frac{P_{\tilde{c}} g_j}{(NM)^2}\sum_{i\in\mathcal{K}_c}\frac{\delta_{ci}^{'}}{\left(1+\delta_{ci}\right)^2}\mathrm{tr}\mathbf{C}_j\mathbf{R}_j\mathbf{C}_j^{\mathrm{H}}\mathbf{T}_c\mathbf{C}_i\boldsymbol{\Phi}_i\mathbf{C}_i^{\mathrm{H}}\mathbf{T}_c\:.
 \end{multlined}
\end{equation}
Second, refer to \eqref{eq:T_def} and use the Cayley-Hamilton theorem~\cite{matrix_analysis} to write $\mathbf{T}_c$ as a weighted sum of the first $NM$ powers of the matrix $\frac{1}{NM}\sum_{j\in\mathcal{K}_c}\frac{\boldsymbol{\Phi}_j}{1+\delta_{cj}}+\frac{\sigma^2}{M}\mathbf{I}_{NM}$. Finally, plug this weighted sum into \eqref{eq:mu_j_tilde_c} and use Assumption~\ref{ass:small_N} along with the fact that $\mathbf{c}_j^{\mathrm{H}}\mathbf{c}_k=0$ for any $(k,j)\in\mathcal{K}_c\times\mathcal{K}_{\tilde{c}}$ satisfying $\tilde{c}\neq c$ to conclude after some tedious calculation that  $\mu_{c\tilde{c}kj}\xrightarrow[M\to\infty]{a. s.}0$ for any $j\in\mathcal{K}_{\tilde{c}}$, thus completing the proof of Corollary~\ref{cor:small_N_MMSE}.
\end{proof}
Asymptotic inter-class orthogonality is thus achieved by MOMA without requiring the actual channel gains over the different $N$ REs used for spreading one data symbol to be equal. This relaxation can be thought of as a consequence of the \emph{massive MIMO effect} which restores the orthogonality of the original $N\times N$ code matrix. Now, to get more insight into the intra-class multiuser interference performance achieved by MOMA, we focus on MF detection and we consider a simplified channel model by assuming
\begin{equation}
 \label{eq:uncorr_spatial}
\mathbf{R}_{k,l}^{\alpha}=\sigma_l^2\mathbf{I}_M,\:\forall k\in\{1\ldots,K\},
l\in\{0,\ldots,L-1\}.
\end{equation}
\begin{corollary}
 \label{cor:simplified_MF}
Assume that the empirical distribution of the large-scale fading coefficients $\{g_k\}_{k\in\mathcal{K}_c}$ converges as $K_c\to\infty$ to the distribution of a random variable with mean $\bar{g}$. If the columns of $\mathbf{W}_c$ are further chosen as realizations of i.i.d. zero-mean random vectors with independent entries that satisfy the average power constraint in \eqref{eq:avg_tx_p_constraint}, then under the simplified channel model in \eqref{eq:uncorr_spatial} and Assumptions~\ref{ass:asymptotic} and \ref{ass:small_N} the deterministic equivalent $\bar{\gamma}_k^{\mathrm{MF}}$ defined in \eqref{eq:class_c_barSINR_MF} is given as
\begin{equation}
 \label{eq:simplified_barSINR_MF}
\bar{\gamma}_k^{\mathrm{MF}}=\frac{P_c g_k
\frac{\mathrm{tr}\check{\boldsymbol{\Phi}}_k}{M}}{
\frac{\sigma^2}{M}+\frac{\bar{g}P_c K_c}{N_c M}},
\forall k\in\mathcal{K}_c\:.
\end{equation}
\end{corollary}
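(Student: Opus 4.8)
The plan is to take the deterministic equivalent \eqref{eq:small_N_class_c_barSINR_MF} of Corollary~\ref{cor:small_N_MF} as given, feed the simplified channel model \eqref{eq:uncorr_spatial} into it, and then dispose of the only surviving random quantity by a law of large numbers. Under \eqref{eq:uncorr_spatial} we have $\mathbf{R}_{k,l}^{\alpha}=\sigma_l^2\mathbf{I}_M$, so, since $\sum_{l=0}^{L-1}\sigma_l^2=1$, the definitions in \eqref{eq:small_N_MatRN_check} give $\check{\mathbf{R}}_k=\mathbf{I}_M$ and make each $\check{\mathbf{R}}_{k,j}^{\mathcal{NP}}$ a scalar multiple of $\mathbf{I}_M$; consequently $\mathbf{R}_k^{\mathcal{P}}$ and hence $\mathbf{Q}_k^{\mathcal{P}}$ are block matrices whose every $M\times M$ block is a scalar multiple of $\mathbf{I}_M$, and by \eqref{eq:MatPhi} one gets $\check{\boldsymbol{\Phi}}_k=\phi_k\mathbf{I}_M$ with $\phi_k\stackrel{\mathrm{def.}}{=}\frac{1}{M}\mathrm{tr}\check{\boldsymbol{\Phi}}_k$. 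Substituting $\check{\mathbf{R}}_j=\mathbf{I}_M$ and $\check{\boldsymbol{\Phi}}_k=\phi_k\mathbf{I}_M$ into \eqref{eq:small_N_class_c_barSINR_MF} collapses every $\mathrm{tr}\check{\mathbf{R}}_j\check{\boldsymbol{\Phi}}_k$ to $\mathrm{tr}\check{\boldsymbol{\Phi}}_k$, after which the common factor $\frac{1}{M}\mathrm{tr}\check{\boldsymbol{\Phi}}_k$ pulls out of the denominator and cancels one of the two such factors in the numerator, leaving
\begin{equation*}
\bar{\gamma}_k^{\mathrm{MF}}=\frac{P_c g_k\frac{1}{M}\mathrm{tr}\check{\boldsymbol{\Phi}}_k}{\frac{\sigma^2}{M}+\frac{1}{M}\sum_{j\in\mathcal{K}_c}P_c g_j\left|\mathbf{w}_j^{\mathrm{H}}\mathbf{w}_k\right|^2}.
\end{equation*}
The numerator already coincides with that of \eqref{eq:simplified_barSINR_MF}, so it only remains to show that $\frac{1}{M}\sum_{j\in\mathcal{K}_c}P_c g_j\left|\mathbf{w}_j^{\mathrm{H}}\mathbf{w}_k\right|^2-\frac{\bar{g}P_cK_c}{N_cM}$ converges almost surely to $0$.

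To prove this I would isolate the $j=k$ summand from the rest. The columns of $\mathbf{W}_c$ being i.i.d.\ random vectors with i.i.d., zero-mean entries of common second moment $1/N_c$ (as fixed by the power constraint \eqref{eq:avg_tx_p_constraint} together with the code normalization \eqref{eq:power_u}), the strong law gives $\|\mathbf{w}_j\|^2\to1$ almost surely as $N_c\to\infty$; hence the $j=k$ term equals $P_cg_k\|\mathbf{w}_k\|^4/M=\frac{P_cg_k}{M}(1+o(1))$, which is of lower order than the retained multiuser-interference term in the overloaded operating regime ($K_c/N_c\to\infty$) for which MF detection is intended, and is therefore absorbed into the asymptotic error. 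For $j\neq k$ I would condition on $\mathbf{w}_k$: the variables $\left|\mathbf{w}_j^{\mathrm{H}}\mathbf{w}_k\right|^2$, $j\in\mathcal{K}_c\setminus\{k\}$, are then i.i.d.\ with conditional mean $\|\mathbf{w}_k\|^2/N_c$ and uniformly bounded higher moments, while by hypothesis $\frac{1}{K_c}\sum_{j\in\mathcal{K}_c}g_j\to\bar{g}$; a conditional strong law for the weighted sum $\sum_{j\neq k}g_j\left|\mathbf{w}_j^{\mathrm{H}}\mathbf{w}_k\right|^2$, combined with $\|\mathbf{w}_k\|^2\to1$, then yields $\sum_{j\neq k}g_j\left|\mathbf{w}_j^{\mathrm{H}}\mathbf{w}_k\right|^2-\bar{g}K_c/N_c\to0$ almost surely, and multiplying through by $P_c/M$ completes the argument.

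The step I expect to be the main obstacle is this last one: making the two nested applications of the law of large numbers rigorous in the joint limit $M,K_c\to\infty$ with $\lim K_c/M<\infty$ — that is, controlling the fluctuations of the random quadratic forms $\left|\mathbf{w}_j^{\mathrm{H}}\mathbf{w}_k\right|^2$ uniformly in $j$, summing a number of them that grows with the dimension (a triangular-array situation), and simultaneously invoking the convergence of the empirical distribution of the large-scale fading coefficients — and then carrying out the bookkeeping that certifies the self-interference ($j=k$) and channel-estimation-error contributions to be genuinely negligible next to the noise and multiuser-interference terms kept in \eqref{eq:simplified_barSINR_MF}. Everything preceding it is substitution and cancellation.
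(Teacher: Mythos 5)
Your proof follows essentially the same route as the paper's: plug \eqref{eq:uncorr_spatial} into \eqref{eq:small_N_MatRN_check} to get $\check{\mathbf{R}}_j=\mathbf{I}_M$, factor the common $\frac{1}{M}\mathrm{tr}\check{\boldsymbol{\Phi}}_k$ out of the denominator of \eqref{eq:small_N_class_c_barSINR_MF}, and replace $\sum_{j\in\mathcal{K}_c}g_j\left(\mathbf{w}_j^{\mathrm{H}}\mathbf{w}_k\right)^2$ by $\bar{g}K_c/N_c$ via a law-of-large-numbers/CLT argument combined with the convergence of the empirical distribution of $\{g_j\}_{j\in\mathcal{K}_c}$ --- these are exactly the three steps of the paper's sketch, and your extra observations (the scalar-identity form of $\check{\boldsymbol{\Phi}}_k$, the isolation of the $j=k$ term, the conditioning on $\mathbf{w}_k$) are refinements of that same route rather than a different argument. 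One small caution: the intermediate claim $\sum_{j\neq k}g_j\left|\mathbf{w}_j^{\mathrm{H}}\mathbf{w}_k\right|^2-\bar{g}K_c/N_c\to0$ is stronger than what holds (the fluctuations of that sum are of order $\sqrt{K_c}$), but the statement you actually need --- that the difference vanishes after multiplication by $1/M$ with $K_c/M$ bounded --- is correct, and your explicit handling of the self-term and of $\|\mathbf{w}_k\|^2\to1$ (which tacitly requires $N_c\to\infty$ or the normalization \eqref{eq:power_w_c}) is if anything more careful than the paper's.
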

Note that the assumption on the empirical distribution of $\{g_k\}_{k\in\mathcal{K}_c}$ is a mild technical assumption which holds, for instance, if the positions of the UTs of each service class are scattered diversely enough within the cell area.
Associating massive MIMO and MOMA thus amounts to an effective spreading gain equal to $N_c M$, thus validating \eqref{eq:scenario_of_interest}.
\begin{proof}[Sketch of the proof]
Plugging \eqref{eq:uncorr_spatial} into \eqref{eq:small_N_MatRN_check} gives $\check{\mathbf{R}}_k=\sum_{l=0}^{L-1}\sigma_l^2\mathbf{I}_M=\mathbf{I}_M$. Since the entries of $\mathbf{W}_c$ are zero-mean i.i.d., then $\frac{1}{M}\sum_{j\in\mathcal{K}_{c}}g_j P_c \left(\mathbf{w}_j^{\mathrm{H}}\mathbf{w}_k\right)^2 \mathrm{tr}\check{\mathbf{R}}_j\check{\boldsymbol{\Phi}}_k$ in \eqref{eq:small_N_class_c_barSINR_MF} is asymptotically equivalent to $\mathrm{tr}\check{\boldsymbol{\Phi}}_k\frac{\bar{g}P_c K_c}{N_c M}$ as $K_c\to\infty$ due to the \emph{central limit theorem} and to the assumption made about the empirical distribution of $\{g_k\}_{k\in\mathcal{K}_c}$. Putting things together completes the proof that \eqref{eq:simplified_barSINR_MF} holds.
\end{proof}


\section{Numerical Results}
\label{sec:simu}

In this section, we consider a LTE-like system with a total bandwidth $W=10$ MHz out of which MOMA occupies a sub-band covering one RB. On this RB, TTI bundling is applied on the basis of 6 consecutive TTIs i.e., $N=6$. We also assume $C=2$, i.e., the terminals scheduled on the considered resource region belong to two service classes. The noise power spectral density is equal to $N_0=-174$ dBm/Hz and users' channels follow the Extended Type Urban (ETU) model \cite{3gpp_etu_epa} with $f_k^D=70$ Hz and $\max_l\tau_l=5$ $\mu$s. Channel vector spatial correlation is simulated using the {\it physical channel model}~\cite{ul_dl} for uniform linear arrays with a number of dimensions equal to half the number of BS antennas. The number $N_p/N$ of pilot symbols per RB is equal to the total number of UTs, i.e., $N_p=N(K_1+K_2)$. Devices of the first class emit with power $P_1=23$ dBm while the transmit power of the second class is $P_2=17$ dBm. The SNR $P_c g_k/\sigma^2$ of the signals received from the terminals of the first class is equal to $15$ dB while the SNR of the second class is equal to 1.5 dB. This difference in SNR values is intended to simulate the effects of different transmit power levels and different propagation conditions that differentiate in practice among the devices from different service classes. All the following results have been obtained by averaging over 100 realizations of the wireless channels.

The orthogonal code matrix $\mathbf{U}$ is a $6\times6$ DFT matrix and $N_1=N_2=3$. The columns of matrices $\mathbf{W}_1$ and $\mathbf{W}_2$ are random vectors with i.i.d. modulus-one complex entries. In the following, we compare MOMA to three multiple-access schemes implemented on the same resource region, namely ``MIMO-eMTC with repetitions'', ``MIMO-eMTC with spreading'', and NB-IoT. By ``MIMO-eMTC with repetitions'' we mean an eMTC system where several UTs are allowed to be scheduled on the same RB (hence the MIMO prefix) while there signals are repeated on 6 consecutive TTIs. In ``MIMO-eMTC with spreading'', signal repetition is replaced with random binary spreading (using the same resource mapping as in MOMA). Finally, NB-IoT designates a pure FDMA system where the available bandwidth is split into $K$ sub-channels, each assigned to a single UT, with inter-sub-channel guard bands that amount to 10\% of the bandwidth of the RB.

First, assume that $K_1=6$ and $K_2=18$. Figure~\ref{fig:r_vs_M} shows the maximum ergodic rate that can be achieved for the two service classes under this assumption. The deterministic equivalent of this rate as given by Corollaries~\ref{cor:small_N_MF} and~\ref{cor:small_N_MMSE} is also plotted in the same figure and is shown to be close to the simulated value even for numbers of BS antennas as small as 16. The approximation is slightly optimistic due to the fact that residual inter-class interference at finite values of $M$ is not accounted for.
\begin{figure}
 \centering
 \includegraphics[width=1.00\hsize]{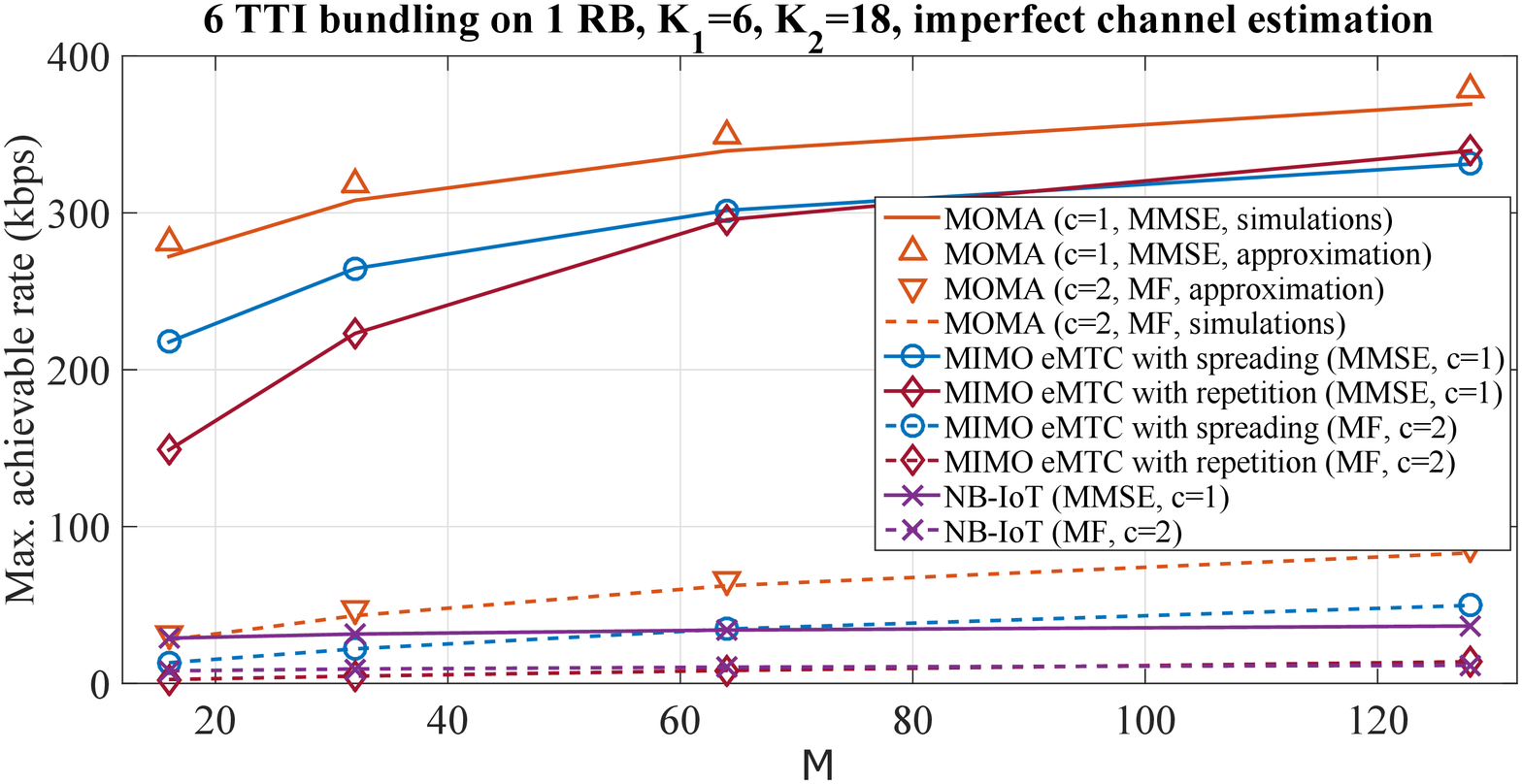}
 \caption{Maximum achievable data rate vs. BS array size}
 \label{fig:r_vs_M}
\end{figure}
Also note that the schemes using spreading, namely MOMA and ``MIMO-eMTC with spreading'', outperform as expected ``MIMO-eMTC with repetition''. Moreover, MOMA's service dependent hierarchical spreading achieves higher data rates for both service classes than conventional random spreading. This is thanks to inter-class quasi-orthogonality achieved by the former spreading technique. The smaller bandwidth per transmission in NB-IoT limits the achievable data rate, especially for the UTs of the first class.

Now, assume that $M=64$ and $K_1=3$. Figure~\ref{fig:K_vs_r} shows in this case the maximum value of $K_2$ that can be supported while $R_k\geq r_2$ for all $k\in\mathcal{K}_2$. Again, MOMA outperforms all the other multiple-access schemes on most of the considered range of target rate values.
\begin{figure}
\centering
\includegraphics[width=1.00\hsize]{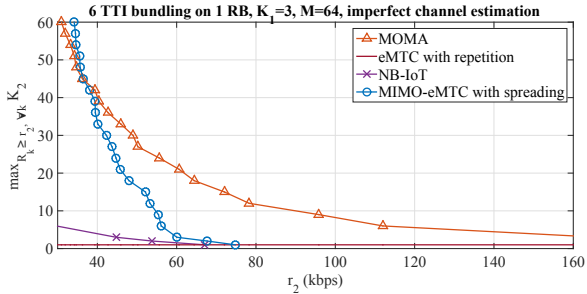}
\caption{Maximum number of simultaneous transmissions from the second class vs. target data rate}
\label{fig:K_vs_r}
\end{figure}
It is worth pointing out that the bundling frame structure needs not be aligned in order for MOMA's spreading to provide the multiplexing gain plotted in Figure~\ref{fig:K_vs_r}. In other words, if one device's first bundled sub-frame coincides with other devices' second (or third or \dots) bundled sub-frames, their respective MOMA's codewords will still allow the BS to distinguish them from each other. This is thanks to the particular resource mapping we propose for MOMA (see Subsection~\ref{sec:moma_mapping}) and which consists in mapping the spread samples to neighboring REs within one RB/sub-frame rather than to neighboring RBs or sub-frames.


\section{Conclusions}

MOMA can create high resource granularity on any sub-region of the resource grid without the need to change the numerology used in that region. This results in an intrinsic flexibility in resource assignment to different classes and concurrent transmissions within them. Moreover, using class dependent hierarchical spreading in MOMA instead of signal repetition, FDMA or conventional CDMA allows for more simultaneous transmissions to be multiplexed at higher target data rates, thus offering better system scalability with respect to increasing device densities. In this article, MOMA's scalability, efficiency and flexibility in resource utilization have been validated using both theoretical analysis and simulations.

\end{document}